\documentclass[12pt]{amsart}

\usepackage{amsmath,amsfonts,amssymb,amsthm,amsopn,cite,mathrsfs}
\usepackage{epsfig,verbatim}
\usepackage{subfigure,color}

\newcommand{\up}{uncertainty principle}
\newcommand{\tfa}{time-frequency analysis}

\newcommand{\tf}{time-frequency}

\newcommand{\fif}{if and only if}
\newcommand{\pss}{phase-space shift}
\newcommand{\tfs}{time-frequency shift}

\newcommand{\psdo}{pseudodifferential operator}

\newtheorem{tm}{Theorem}[section]
\newtheorem{lemma}[tm]{Lemma}
\newtheorem{prop}[tm]{Proposition}
\newtheorem{cor}[tm]{Corollary}

\newcommand{\beqa}{\begin{eqnarray*}}
\newcommand{\eeqa}{\end{eqnarray*}}

\newcommand{\field}[1]{\mathbb{#1}}
\newcommand{\bR}{\field{R}}        
\newcommand{\bN}{\field{N}}        
\newcommand{\bZ}{\field{Z}}      
\newcommand{\bC}{\field{C}}      
       
\newcommand{\bQ}{\field{Q}}

 \def\cG{\mathcal{G}}

\def\rd{\bR^d}

\def\rdd{{\bR^{2d}}}

\def\<{\left<}
\def\>{\right>}

\def\inv{^{-1}}

\def\mv1{M_v^1}

\newcommand{\vs}{\vspace{3 mm}}

\newcommand{\ip}[2]{\ensuremath{\left<#1,#2\right>}}
\newcommand{\sett}[1]{\ensuremath{\left \{ #1 \right \}}}
\newcommand{\abs}[1]{\ensuremath{\left| #1 \right| }}
\newcommand{\Rdst}{{\mathbb{R}^d}}
\newcommand{\Rst}{{\mathbb{R}}}
\newcommand{\Zst}{{\mathbb{Z}}}
\newcommand{\Nst}{{\mathbb{N}}}
\newcommand{\Rtdst}{{\mathbb{R}^{2d}}}

\newcommand{\norm}[1]{\lVert#1\rVert}
\newcommand{\set}[2]{\big\{ \, #1 \, :  \, #2 \, \big\}}

\newcommand{\gablam}{\cG (g,\Lambda  )}

\newcommand{\gab}{\mathcal{G}(g,\alpha , \beta )}
\newcommand{\ltwo}{L^2(\bR )}
\newcommand{\saa}{S^{1,1}} 

\newcommand{\R}{\mathbb{R}}

\addtolength{\oddsidemargin}{-1cm}
\addtolength{\evensidemargin}{-1cm}
\addtolength{\textwidth}{2cm}
\addtolength{\topmargin}{-1cm}
\addtolength{\textheight}{1cm}

\newcommand{\Qg}{\mathcal{Q}_g}
\newcommand{\Qphi}{\mathcal{Q}_{\phi}}
\newcommand{\Ag}{\mathcal{A}_g}
\newcommand{\Intalpha}{I_\alpha}
\newcommand{\expg}{\Theta_g^{L}(x,N)}
\newcommand{\expphi}{\Theta_{\phi}^{L}(x,N)}
\newcommand{\newk}{(k_0, \dots, k_{p-1})}

\begin{document}
\begin{abstract}
We investigate the completeness of Gabor systems  with respect to
several classes of window functions  on rational lattices. Our main
results show that the \tf\  shifts of every finite linear
combination of Hermite functions with respect to a rational lattice
are complete in $L^2(\bR )$, thus generalizing a remark of von Neumann
(and proved by Bargmann, Perelomov et al.). An analogous result is
proven for functions  that factor into certain rational functions and the
Gaussian. The results are also interesting from a conceptual point of
view since they show a vast difference between the completeness and
the frame property  of a  Gabor system. In the  terminology of physics we
prove new results about the completeness of coherent state
subsystems. 
\end{abstract}

\title{Completeness of Gabor Systems}
\author{Karlheinz Gr\"ochenig}
\email{karlheinz.groechenig@univie.ac.at}
\author{Antti Haimi}
\email{antti.haimi@math.ntnu.no}

\author{Jos\'e Luis Romero}
\email{jose.luis.romero@univie.ac.at}

\address{Faculty of Mathematics \\
University of Vienna \\
Oskar-Morgenstern-Platz 1 \\
A-1090 Vienna, Austria}
\subjclass[2010]{42C30,42C15,81R30}
\date{}
\keywords{Gabor system, Hermite function, totally positive function,
  coherent state, completeness, frame}
\thanks{K.\ G.\ was
  supported in part by the  project P26273 - N25  of the
Austrian Science Fund (FWF). A.\ H.\ was supported by a Lise Meitner
grant of the Austrian Science Fund (FWF). 
J. L. R. gratefully acknowledges support from an individual Marie
Curie fellowship, under grant PIIF-GA-2012-327063
(EU FP7).}
\maketitle

\section{Introduction}

We study the question when the set of \tfs s 
\begin{equation}
  \label{eq:c1}
\gab = \{ e^{2\pi i \beta l x } g(x-\alpha k) : k,l \in \bZ \}  
\end{equation}
is complete in $\ltwo $, where $g \in \ltwo $   and the
lattice parameters $\alpha , \beta >0$  are fixed. 
The completeness  question arose
first in J.\ von Neumann's treatment of quantum mechanics~\cite{neumann} and
remains relevant in physics and in applied mathematics. The motivation
in signal analysis and \tfa\ comes from   Gabor's fundamental
paper~\cite{gabor46} on information theory. Gabor  tried to expand a
given function (signal) into a series of \tfs s.  Correspondingly, in
mathematical terminology the set $\gab $ is called a Gabor 
system with window function $g$. 
In quantum mechanics the functions $e^{2\pi i \beta l x } g(x-\alpha
k)$ are called \pss s of a (generalized) coherent state, and 
$\gab $ can be
interpreted as a discrete set of coherent states  with respect to the
Heisenberg group   over a lattice $\alpha
\bZ \times \beta \bZ $~\cite{alanga00}. 
In fact,
Perelomov's book~\cite{perelomov86} on coherent states contains several sections devoted
to the ``completeness of coherent state subsystems.''

In the applied mathematics literature of the last 20 years the
interest in Gabor systems has shifted to the frame property,  
mainly for numerical reasons~\cite{feichtinger-strohmer98} and because Gabor frames can
be used to characterize function spaces~\cite{fg97jfa} and to describe \psdo
s~\cite{gro06}. Here
$\gab $ is a (Gabor) frame for $\ltwo $, if there exist constants $A,B>0$ such
that 
\begin{equation}
  \label{eq:c2}
A\norm{f}_2^2 \leq \sum _{k,l \in \bZ } | \langle f, e^{2\pi i \beta l
  \cdot } g(\cdot - \alpha k)\rangle |^2 \leq B \norm{f}_2^2 \qquad
\forall f\in \ltwo \, .
\end{equation}
Clearly \eqref{eq:c2} implies that $\gab $ is complete in $\ltwo $,
but in general the frame property of $\gab$ is much  stronger than its
completeness.
This difference is already present in von Neumann's example $\cG (\phi
, 1, 1)$ where $\phi (x) = e^{-\pi x^2}$ is the Gaussian, i.e.,  the
canonical  coherent state in quantum mechanics,  and $\alpha = \beta =1$. In this case it was
proved 40 years after von Neumann that $\cG (\phi , 1,1)$ is
complete, but not a frame~\cite{bargmann71,perelomov71}.

The main  intuition for the results about the  completeness and the
frame property of $\gab $ is based on the \up . According to the \up\
every physical state $g$, i.e., $g\in \ltwo, \|g\|_2^2 = \int _\bR
|g(x)|^2 \, dx =1$, occupies a cell in phase space (in the \tf\ plane)
of minimal area
one. The \pss\ $e^{2\pi i \beta l x} g(x-\alpha k)$ is located roughly
at position $\alpha k$ and momentum $\beta l$ in phase space $\bR
^2$. Thus, in order to cover the entire phase space with a discrete
set of coherent states $\gab $, we must have necessarily $\alpha \beta
\leq 1$, otherwise there would be gaps in phase space that cannot be
reached by a \pss\ of the form  $e^{2\pi i \beta l x} g(x-\alpha
k)$. The physical intuition has been made mathematically rigorous in
the form of numerous density theorems for Gabor systems: \emph{If
  $\gab $ is complete in $\ltwo $, then necessarily $\alpha \beta \leq
1$}~\cite{daubechies90,RS95,heil07}. The converse holds only for
special window functions. It was already noted in~\cite{bargmann71,perelomov71}
that for the Gaussian $\phi (x) = e^{-\pi x^2}$ the set $\cG (\phi ,
\alpha , \beta )= \{ e^{2\pi i
  \beta l x} e^{- (x-k\alpha )^2} : k,l \in \bZ \}$ is
complete for $\alpha \beta \leq 1$ and incomplete for $\alpha \beta
>1$. In 1992  Lyubarskii\cite{lyub92} and Seip~\cite{seip92} 
strengthened this statement and showed that $\cG (\phi ,
\alpha , \beta )$ is frame, \fif\ $\alpha \beta < 1$. (In fact, they
stated their results for arbitrary \tfs s, not just lattice shifts.)
The  recent work ~\cite{GS13} shows that an analogous result also holds for
the class of  so-called totally positive
functions of finite type. Again, as in the case of the Gaussian,
these functions  possess good \tf\  localization, which in
mathematical terms amounts to additional analyticity properties.

In this paper we return to the  completeness problem for Gabor
systems.  In agreement with the physical description of quantum
states, we will assume 
that the window functions  possess strong localization properties in
the \tf\ plane. Technically,
we will assume that $g$ and its Fourier transform have exponential decay.

Our main results provide a significant generalization of von Neumann
observation on the completeness of coherent states (Gabor systems)  and  may be summarized as follows. 

\begin{tm} \label{tm:intro}
Assume that $g$ factors as  $g(x)= R(x) e^{-\gamma  x^2}$,   where
$\gamma >0$,  $R$ is
either a rational function with no real poles or a finite sum of complex
exponentials $R(x) = \sum _{j=1}^m  c_j  e^{\lambda _j x}$ with $c_j,
\lambda _j \in \bC $. If  $\alpha \beta $ is rational and  $\alpha
\beta \leq 1$,    
then  the
Gabor system $\mathcal{G}(g, \alpha, \beta)$ is 
complete in $L^2(\mathbb{R})$.
\end{tm}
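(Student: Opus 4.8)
The plan is to normalize the lattice by a dilation and then to decide completeness through the Zak transform and the Zibulski--Zeevi matrix attached to the rational redundancy, exploiting that for windows as analytic as these the customary ``almost everywhere'' conditions reduce to conditions at a single point. The dilation $D_cf(x)=c^{1/2}f(cx)$ carries $\cG(g,\alpha,\beta)$ unitarily onto $\cG(D_\alpha g,1,\alpha\beta)$, and $D_\alpha g(x)=\alpha^{1/2}R(\alpha x)e^{-\gamma\alpha^{2}x^{2}}$ is again of the admissible form --- $R(\alpha\,\cdot\,)$ is a rational function with no real poles, resp.\ a finite sum of complex exponentials, whenever $R$ is. So I may assume $\alpha=1$ and, writing $\alpha\beta=p/q$ in lowest terms, $\beta=p/q$ with $1\le p\le q$.

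Next I would invoke the completeness counterpart of the Zibulski--Zeevi description of Gabor systems at rational redundancy: there is a $q\times p$ matrix $P_g(x,\omega)$ whose entries are values of the Zak transform $Zg(x,\omega)=\sum_{n\in\bZ}g(x-n)e^{2\pi i n\omega}$ at a finite family of translates of $(x,\omega)$ determined by $p$ and $q$, such that $\cG(g,1,p/q)$ is complete in $\ltwo$ \fif\ $P_g(x,\omega)$ has rank $p$ for almost every $(x,\omega)$; by the density theorem $p\le q$, so full rank is possible. The hypotheses on $g$ now enter decisively: since $R$ is real-analytic on $\bR$ (a rational function is holomorphic off its poles, an exponential sum is entire) and $g=Re^{-\gamma x^{2}}$, with all its derivatives, decays faster than any exponential, the series for $Zg$ converges locally uniformly with all derivatives and $Zg$ extends holomorphically to a connected complex neighbourhood of $\bR^{2}$ --- to all of $\bC^{2}$ when $R$ is an exponential sum, to $\{\,|\mathrm{Im}\,x|<\delta\,\}\times\bC$ (with $\delta$ the distance from $\bR$ to the poles of $R$) when $R$ is rational. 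Hence each $p\times p$ minor of $P_g$ is real-analytic on a connected domain, so it either vanishes identically or is non-zero almost everywhere, and therefore $\cG(g,1,p/q)$ is complete \fif\ at least one maximal minor of $P_g$ is not identically zero --- equivalently, \fif\ $P_g$ has rank $p$ at a single conveniently chosen point.

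The crux is thus to show that not every $p\times p$ minor of $P_g$ vanishes identically, and here I would compare with the Gaussian case, which is classical: for $\phi_\gamma(x)=e^{-\gamma x^{2}}$ the system $\cG(\phi_\gamma,1,p/q)$ is complete whenever $p/q\le1$ \cite{bargmann71,perelomov71,lyub92,seip92}, so $P_{\phi_\gamma}$ --- assembled from classical theta functions, whose zeros are isolated and simple --- has a maximal minor $m_0$ which, being built from the entire function $Z\phi_\gamma$, is entire and not identically zero, hence has a thin zero set. When $R(x)=\sum_{j=1}^{m}c_je^{\lambda_jx}$, completing the square exhibits $g$ as a finite linear combination of complex translates of $\phi_\gamma$ --- equivalently of genuine \tfs s of the Gaussian --- so that $Zg$, hence $P_g$, is a finite linear combination of modulated translates of $Z\phi_\gamma$, hence of $P_{\phi_\gamma}$; letting one imaginary coordinate of the argument tend to infinity, the summand with the most extreme translation or modulation dominates exponentially, so $P_g$ becomes asymptotically a non-vanishing scalar multiple of a single modulated translate of $P_{\phi_\gamma}$ and therefore has rank $p$ at suitable points. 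When $R=P/Q$ is rational with no real poles, I would use a partial-fraction decomposition together with the Laplace-type identities $(x-\mu)^{-k}=c_k\int_0^{\infty}t^{k-1}e^{\mp it(x-\mu)}\,dt$ (sign chosen from $\mathrm{sgn}\,\mathrm{Im}\,\mu$, so the integral converges after multiplication by the Gaussian) to write $g$ as an absolutely convergent superposition of modulated Gaussians and to run the analogous asymptotic analysis of $Zg(x,\omega)=\sum_{n}R(x-n)e^{-\gamma(x-n)^{2}}e^{2\pi i n\omega}$ as $\mathrm{Im}\,x\to\infty$, again reaching a non-identically-vanishing maximal minor of $P_g$. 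In either case $\cG(g,1,p/q)$ --- and hence the original $\cG(g,\alpha,\beta)$ --- is complete in $\ltwo$.

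I expect the main obstacle to be exactly this last step. It does not suffice to note that $Zg\not\equiv0$ (immediate, since the Zak transform is injective and $g\neq0$); one must control the whole $q\times p$ matrix and exclude \emph{global} rank deficiency, and for that the precise form $g=Re^{-\gamma x^{2}}$, the comparison with the Gaussian, and the right regime in which a single contribution dominates all seem indispensable. Finally, the rationality of $\alpha\beta$ is used essentially here: it is what turns completeness into a rank condition on a \emph{finite} matrix of quasi-periodic analytic Zak transforms, so that analyticity can replace ``almost everywhere'' by ``at one point''; over an irrational lattice this finite reduction is unavailable and one would have to confront small-divisor obstructions.
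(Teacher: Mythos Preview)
Your framework is exactly the paper's: normalize, pass to the Zibulski--Zeevi matrix, use analyticity of $Zg$ to reduce ``full rank almost everywhere'' to ``some $p\times p$ minor is not identically zero'', and compare with the Gaussian. That much is correct and matches Section~2 of the paper precisely.

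The gap is in the crucial extraction step for the rational case. You yourself observe that when $R$ is rational, $Zg$ extends analytically only to $\{\abs{\mathrm{Im}\,x}<\delta\}\times\bC$ --- and then propose to let $\mathrm{Im}\,x\to\infty$. That direction is blocked by the poles of $R$: the minor is simply undefined there. Your fallback, writing $g$ as an absolutely convergent \emph{integral} superposition of modulated Gaussians via partial fractions and Laplace identities, produces a continuum of contributions in which no single one dominates, so the ``one summand wins exponentially'' mechanism you used for finite exponential sums does not transfer; you would still owe an argument that the minor does not vanish. The paper sidesteps both problems by staying on the real axis, where $R$ has no poles and satisfies $R(y)/y^{\Delta}\to1$ as $y\to+\infty$. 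It first expands the chosen minor as a Fourier series in $\xi$ with coefficients
\[
\Theta_g^L(x,N)=\sum_{k_0+\cdots+k_{p-1}=N}\ \prod_{j=0}^{p-1}g\bigl(x+\tfrac{\alpha j}{p}-\alpha k_j\bigr)\,c(k_0,\dots,k_{p-1}),
\]
factors out the Gaussian exponential, and then shows that $x^{-p\Delta}$ times the remaining sum tends, as real $x\to+\infty$, to the corresponding Gaussian coefficient $s_0(N)$. A short lemma controlling $\lvert P(x)/Q(x)-1\rvert$ uniformly for roots in a ball, combined with the Gaussian decay of the summands in $k$, justifies the interchange of limit and infinite sum. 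Hence incompleteness of $\gab$ would force $s_0(N)=0$ for all $N$, contradicting completeness of $\cG(\phi,\alpha,\beta)$.

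For the exponential-sum case your asymptotic-dominance idea is morally right and can be made rigorous, but the paper's execution is cleaner and again purely real: in $\Theta_g^L(x,N)$ the product $\prod_j E(x+\tfrac{\alpha j}{p}-\alpha k_j)$ expands as a finite combination of exponentials $e^{(\lambda_{i_0}+\cdots+\lambda_{i_{p-1}})x}$, and linear independence of complex exponentials on $\bR$ isolates the coefficient of $e^{p\lambda^{*}x}$ (with $\lambda^{*}$ the lexicographic minimum of $\Lambda$, so that the exponent sum equals $p\lambda^{*}$ only when all $\lambda_{i_j}=\lambda^{*}$). That coefficient is a non-zero constant times $s_0(N)$, and the same contradiction follows. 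No passage to complex infinity, and no worry about cross terms from multilinearity of the determinant, is needed.
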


One may say that  the physical intuition works far beyond the canonical
coherent states.
For rational lattices $\alpha \beta \in \bQ$ and windows as in Thm.~\ref{tm:intro}, the Gabor system
(coherent state subsystem) $\gab $ is complete, \fif\ $\alpha \beta
\leq 1$.  Although it seems natural  that Theorem~\ref{tm:intro} can
be extended to 
arbitrary rectangular lattices $\alpha \bZ \times \beta \bZ $ with
$\alpha \beta \leq 1$, we must leave this question  open because  there is no
useful  completeness characterization over 
irrational lattices.

To  put this result in perspective, we note three special cases
(Theorems~\ref{t:pol}, \ref{t:rat}, and \ref{t:expsum} and their
corollaries  in
Section 3). 

(a) Theorem~\ref{tm:intro} covers all Hermite functions $h_n = c_n
e^{\pi x^2} \tfrac{d^n}{dx^n } (e^{-2 \pi x^2})$ (with a suitable
  normalization constant $c_n$ and $n\in \bN $). So far it was known that $\cG (h_n,
  \alpha  , \beta)$ is a frame, if $\alpha \beta <
  \tfrac{1}{n+1}$~\cite{GL09}. However, for  odd
  $n=2m+1$  and  $\alpha
  \beta = 1- \tfrac{1}{N} , N=2, 3, \dots , \infty $,  the Gabor system $\cG (h_{2m+1}, \alpha , \beta
    )$   is not a frame~\cite{LN13}. The achievement of
    Theorem~\ref{tm:intro} is to assert the completeness of $\cG (h_n
    , \alpha , \beta )$ for \emph{all} Hermite functions and
    \emph{all} rational lattices with $\alpha \beta \leq 1$. 

The example of the Hermite functions also indicates the limits of the
physical intuition which does not explain the difference between
completeness (intuition is confirmed perfectly by
Theorem~\ref{tm:intro})  and the frame property (intuition is not
correct). It would be interesting to understand from physical
principles why the frame property can be destroyed by certain
symmetries, as is the case for the odd Hermite functions. 

Let us mention that, after a suitable transformation and choice of
representation,  the Gabor systems $\cG (h_n, \alpha , \beta )$ can
also be interpreted as a coherent state subsystem for the degenerate
Landau levels of a  particle in a constant magnetic
field. Thus our results extend to higher Landau levels what was previously known only for the ground state. The standard complex analytic techniques that can be used to deal with 
the ground state 
do not work with more general Landau levels. 
See~\cite{MR3203099} for a detailed explanation of the
connection between Gabor systems and the Landau levels. 

(b) If $g$ is given by its Fourier transform  $\hat{g}(\xi ) = \prod _{j=1}^m (1 +
i \delta _j \xi )\inv e^{-\gamma \xi ^2}$ for $\delta _j \in \bR $, then $g$ is a totally
positive function by Schoenberg's
characterization~\cite{schoenberg1947totally}. It was conjectured in
~\cite{G14} that the Gabor system $\gab $ for a totally positive
function $g$ generates a frame, \fif\ $\alpha \beta  <1$, but so far
this statement is known to be true only for the class of totally
positive functions of finite type~\cite{GS13}. Theorem~\ref{tm:intro} offers a
similar result for the case of completeness for a complementary
class of totally positive functions and  supports the original
conjecture.

(c) Finally, Theorem~\ref{tm:intro} covers the case when $g$ is a
finite linear combination of \tfs s of the Gaussian 
$\phi (x) = e^{-\pi x^2}$. Again, this is in line with the physical
intuition.

The remainder of the paper will be organized as follows: In Section~2
we recall the abstract characterizations of complete Gabor systems over a
rational lattice due to Zeevi and Zibulski~\cite{zibulski1997analysis}
and derive a
specialized criterium for functions in the Gelfand-Shilov class. We
will juxtapose these characterizations with the corresponding
characterizations of the frame property.   The
comparison of these criteria  reveals the fundamental difference between
the completeness and the frame property and  is
particularly striking for window functions in the Gelfand-Shilov space.  The
completeness property  hinges on the
analyticity of $g$ and $\hat{g}$. It is easy to produce
counter-examples to Theorem~\ref{tm:intro} belonging to 
the Schwartz class, where  $g$ and $\hat{g} $ are
$C^\infty $.

In Section~3 we will prove several versions of
Theorem~\ref{tm:intro}. The main idea is to use our knowledge that
the Gaussian window $\phi (x) = e^{-\pi x^2}$  generates a complete system
$\cG (\phi , \alpha   , \beta ) $ for $\alpha \beta \leq 1$, and  to subsequently show that the
algebraic conditions that guarantee the completeness of $\gab $ for a
factorized function $g(x) = R(x) e^{-\pi x^2}$  are satisfied. This is
a new proof method in Gabor analysis that very likely can be sharpened
and extended. Ultimately we hope that our results will also lead to a
better understanding of several unsolved  problems about Gabor
\emph{frames} that were formulated in~\cite{G14}.

Notation: We write $f(x) \lesssim g(x)$ to say that there exists a
constant independent of $x$, such that $f(x) \leq C g(x)$ for all
$x$. Furthermore, $f(x)  \asymp g(x) $ means that $f(x) \lesssim g(x)$
and $g(x) \lesssim f(x)$. 

\section{Gabor systems}

The translation and modulation operators act on a function $f: \Rst \to \bC$ by
\begin{align*}
&M_b f(x) := e^{2\pi i b x}f(x), \qquad b  \in \Rst,
\\
&T_a f(x) := f(x-a), \qquad a \in \Rst.
\end{align*}
The composition $M_b T_a$ is called a \tfs\ (or  \pss\
in the language of quantum mechanics).
A Gabor system is a collection of \tfs s  of a given function (a
window function in signal analysis, or a
quantum mechanical state).
Given $g\in L^2(\bR)$,  $\alpha , \beta >0$, we define
then 
$$
\gab = \{ M_{\beta l } T_{\alpha k}g: k,l \in \bZ \}.
$$
The frame operator associated with this family is given by
\[ 
Sf = \sum_{k, l \in \mathbb{Z}} \langle f, M_{l \beta} T_{k \alpha} g \rangle M_{\beta l} T_{\alpha k } g.
\]
Under mild assumptions on $g$  the frame operator is
bounded~\cite{walnut92,MR1843717}. The   standard assumption is that $g$ belongs to the modulation space
\begin{align*}
M^1(\Rst) := \set{f \in L^2(\Rst)}{\int_{\Rst^2} \abs{\ip{f}{M_\xi T_x
      f}} dx d\xi  < +\infty},
\end{align*}
also known as the \emph{Feichtinger algebra}. If  $g \in M^1(\Rst)$,
then $S$ is bounded
on $L^2(\Rst)$. 
A Gabor system $\gab$ is called a \emph{frame} if the frame operator
$S$ is invertible on $L^2(\Rst)$. This is equivalent to  the frame inequalities
\eqref{eq:c2}.

We say that $\gab$ is complete (in $L^2$) if the linear span of $\gab$ is a dense subspace of 
$L^2(\Rst)$. Equivalently, $S$ is one-to-one on $\ltwo $. 

While the completeness of $\gab$ means that any function $f \in L^2(\Rst)$ can be approximated
by linear combinations of elements of $\gab$, the frame property implies the existence of a convergent expansion
\begin{align*}
f = \sum_{k, l \in \mathbb{Z}} a_{k,l} M_{\beta l} T_{\alpha k } g,
\end{align*}
with $\norm{a}_2 \asymp \norm{f}_2$.  For  the Gaussian $\phi (t) :=
e^{-\pi t^2}$, 
$\cG (\phi , \alpha, \beta ) $ is complete if and only if $\alpha \beta \leq 1$
and $\cG (\phi , \alpha, \beta ) $  is a frame if and only if $\alpha \beta < 1$ 
\cite{lyub92,perelomov71,seip92}.

\subsection{Completeness criteria for rational lattices}
Our starting point is the characterization of the completeness of a Gabor system
over a rational lattice by means of the Zak transform, due to Zeevi and
Zibulski \cite{zibulski1997analysis}.

We write $\alpha \beta = p/q$ for relatively prime positive integers
$p$ and $q$ and assume that $p\leq q$.  The Zak transform of $f \in L^2(\mathbb{R})$ with respect to a
parameter $ \alpha \in \mathbb{R}$ is 
\[
Z_\alpha f(x, \xi) = \sum_{k \in \mathbb{Z}} f(x- \alpha k) e^{2 \pi i
  \alpha k \xi }.
\]
By quasi-periodicity  this
function is completely determined by its values on the rectangle
$\Intalpha = [0,\alpha ] \times [0, 1/\alpha ]$. Furthermore, the Zak transform is a unitary isomorphism
from 
$L^2(\mathbb{R})$ to $L^2(\Intalpha)$.

A note on terminology:  In signal processing  $Z_\alpha $ is called  the Zak
transform in reference to one of its first applications by  J.\
Zak~\cite{zak67},  in  solid states physics $Z_\alpha $ is usually
called the 
Bloch-Floquet transform ~\cite{kuch93}, and  in harmonic analysis it is
often called the Weil-Brezin transform~\cite{folland89}.

 We will also need the vector-valued version 
\[
\overrightarrow{Z_{\alpha}}f(x, \xi) = \big( Z_\alpha f(x+ \tfrac{\alpha}{p} r, \xi) \big)_{r=0}^{p-1}, \quad x,\xi 
\in 
[0, 
\alpha/p) \times [0, 1/\alpha), 
\] 
which is unitary  from $L^2(\mathbb{R})$ onto the  space $L^2 \big( [0, \alpha/p) \times [0, 
1/\alpha), \mathbb{C}^p \big)$ of vector-valued functions.

For rational lattices $\alpha \bZ \times \beta \bZ $ with $\alpha
\beta \in \bQ $, the characterizations of the frame property and of the completeness of $\gab$
are in terms of the spectrum of a certain matrix containing
Zak transforms of $g$.
Let $\Qg(x,\xi )$  be the $p \times q   $-matrix with entries 
\[
\Qg(x,  \xi)_{jk}  = Z_\alpha g (x+ \frac{\alpha j}{p}, \xi - \beta k)
e^{2 \pi i j k /q}  \qquad   j=0, \dots , p-1, k= 0, \dots
,  q-1. 
\]
and $\Ag(x,\xi)$ be the corresponding $p\times p$ square matrix 
\[
\Ag(x,\xi) = \Qg(x,\xi) \Qg(x, \xi)^* \, .
\]

Then the  frame operator has the representation 
\begin{equation} \label{frameopzak}
\overrightarrow{Z_{\alpha}}Sf(x, \xi)= \alpha^d \Ag(x,\xi) \overrightarrow{Z_{\alpha}}f(x, \xi), \quad x,\xi \in [0, 
\alpha/p) 
\times [0, 1/\alpha) , 
\end{equation}
which is well defined for  $g \in M^1(\Rst)$. 
The following characterization of completeness is 
 due to Zeevi and Zibulski~\cite[Theorem 2]{zibulski1997analysis}. (We use a slightly different setup that follows
\cite[Chapter 8]{MR1843717}.)
\begin{lemma}
  \label{l:comp1}
Let $g\in L^2(\Rst)$ and $\alpha \beta = p/q \in \bQ$. Then following
are equivalent.

(i) The Gabor system $\gab $ is complete in $L^2(\bR )$. 

(ii) $\det \Ag(x,\xi)\neq 0$, for almost every $(x,\xi) \in \Rst^2$.

(iii) The matrix $\Qg(x,\xi )$ has full rank $p$  for almost every $(x,\xi) \in \Rst^2$.
\end{lemma}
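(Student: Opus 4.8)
\textbf{Proof plan for Lemma~\ref{l:comp1}.}
The plan is to leverage the frame-operator representation \eqref{frameopzak}, which conjugates $S$ to pointwise multiplication by the matrix $\alpha^d \Ag(x,\xi)$ acting on the vector-valued Zak transform $\overrightarrow{Z_{\alpha}}f$. Since $\overrightarrow{Z_{\alpha}}$ is a unitary isomorphism from $L^2(\Rst)$ onto $L^2\big([0,\alpha/p)\times[0,1/\alpha),\mathbb{C}^p\big)$, completeness of $\gab$, which is the statement that $S$ is one-to-one on $L^2(\Rst)$, transfers to the statement that the multiplication operator $F(x,\xi)\mapsto \Ag(x,\xi)F(x,\xi)$ is one-to-one on the space of $\mathbb{C}^p$-valued square-integrable functions on the fundamental domain. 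This is the reduction that does the real work, and it is exactly the content of the Zeevi--Zibulski theorem as reformulated via \cite[Chapter~8]{MR1843717}; I would cite \eqref{frameopzak} and the unitarity of $\overrightarrow{Z_{\alpha}}$ and then argue purely about matrix multiplication operators.

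The equivalence $(i)\Leftrightarrow(ii)$ is then a standard fact about multiplication operators by matrix-valued functions: the operator $F\mapsto \Ag F$ fails to be injective precisely when $\det\Ag(x,\xi)=0$ on a set of positive measure. First I would prove the easy direction: if $\det\Ag\neq 0$ a.e., then for any $F$ with $\Ag F=0$ a.e. we get $F(x,\xi)=0$ at every point where $\Ag(x,\xi)$ is invertible, hence $F=0$ a.e., so $S$ is one-to-one. For the converse, suppose $E=\{(x,\xi):\det\Ag(x,\xi)=0\}$ has positive measure; I would use measurable selection to pick, for each $(x,\xi)\in E$, a unit vector $v(x,\xi)\in\ker\Ag(x,\xi)$ depending measurably on the point (the kernel of a positive semidefinite matrix with a zero eigenvalue is nontrivial, and the eigenvector can be chosen measurably), set $F=v\cdot\mathbf{1}_E$, note $F\neq 0$ in $L^2$ but $\Ag F=0$, so $S$ is not injective. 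A small technical point to address is that $\Ag$ need only be defined and measurable a.e.\ for a general $g\in L^2(\Rst)$ (the representation \eqref{frameopzak} is stated for $g\in M^1$, but the matrix entries $\Qg(x,\xi)_{jk}$ involve only $Z_\alpha g$, which is a.e.-defined and measurable for any $L^2$ window, so the matrix $\Ag(x,\xi)$ makes sense pointwise a.e.); I would remark that the injectivity criterion still follows because the relevant part of \eqref{frameopzak} is the algebraic identity, not boundedness.

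The equivalence $(ii)\Leftrightarrow(iii)$ is elementary linear algebra applied pointwise: since $\Ag(x,\xi)=\Qg(x,\xi)\Qg(x,\xi)^*$ with $\Qg(x,\xi)$ a $p\times q$ matrix and $p\le q$, the $p\times p$ Gram-type matrix $\Ag(x,\xi)$ is invertible if and only if the rows of $\Qg(x,\xi)$ are linearly independent, i.e.\ $\Qg(x,\xi)$ has full row rank $p$. Indeed $\ker\Ag(x,\xi)=\ker\Qg(x,\xi)^*$ because $\langle \Ag v,v\rangle = \|\Qg^* v\|^2$, so $\det\Ag\neq 0 \Leftrightarrow \Qg^*$ injective $\Leftrightarrow \operatorname{rank}\Qg=p$. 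Taking this equivalence at every point outside a null set gives $(ii)\Leftrightarrow(iii)$. The main obstacle, such as it is, is entirely in the converse of $(i)\Leftrightarrow(ii)$: one must produce an honest nonzero element of the kernel of $S$ from the degeneracy set, which requires the measurable-selection argument for the kernel eigenvector; everything else is bookkeeping with the unitarity of the (vector-valued) Zak transform and pointwise linear algebra.
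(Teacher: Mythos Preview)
The paper does not actually prove Lemma~\ref{l:comp1}; it quotes the result from Zeevi--Zibulski \cite[Theorem~2]{zibulski1997analysis} (in the formulation of \cite[Chapter~8]{MR1843717}) and moves on. Your plan is the standard argument behind that citation, and the pointwise linear-algebra equivalence $(ii)\Leftrightarrow(iii)$ as well as the measurable-selection construction of a kernel element are exactly right.

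There is one soft spot you flag but do not quite close. For a general $g\in L^2(\Rst)$ the frame operator $S$ need not be bounded (or even everywhere defined), so the sentence ``completeness is the statement that $S$ is one-to-one'' and the identity \eqref{frameopzak} are not available as written; that equivalence and that representation are asserted in the paper only under $g\in M^1$. The clean fix is to bypass $S$ and work with the analysis map $f\mapsto\big(\langle f,M_{\beta l}T_{\alpha k}g\rangle\big)_{k,l}$, which is well-defined coefficientwise for any $g\in L^2$. The Zibulski--Zeevi identity one actually needs (and which underlies \eqref{frameopzak}) says that, after a discrete Fourier transform in the indices, these coefficients are obtained from $\Qg(x,\xi)^*\,\overrightarrow{Z_\alpha}f(x,\xi)$. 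Thus $f\perp\gab$ if and only if $\Qg(x,\xi)^*\,\overrightarrow{Z_\alpha}f(x,\xi)=0$ a.e., and from there your argument runs unchanged: injectivity of pointwise multiplication by $\Qg^*$ on $L^2\big([0,\alpha/p)\times[0,1/\alpha),\bC^p\big)$ is equivalent to $\mathrm{rank}\,\Qg(x,\xi)=p$ a.e., and the measurable kernel-vector selection handles the converse. With this adjustment your proof is complete and matches the literature argument the paper is invoking.
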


It is instructive to compare this characterization of completeness
with the corresponding characterization of the  frame property
in~\cite[Thm. 4]{zibulski1997analysis}: 
$\gab$ is a frame, \fif\ there exist $0<\delta \leq \Delta $ such that
$\delta \leq \abs{\det
  \Ag(x,\xi)} \leq \Delta$, for almost every $(x,\xi) \in \Rst^2$. 
When $g \in M^1(\Rst)$, the continuity and quasi-periodicity of $Z_g$
yield the following simple characterization
of the frame property.

\begin{lemma}
  \label{l:frame1}
Let $g\in M^1(\bR )$ and $\alpha \beta = p/q \in \bQ$. Then following
are equivalent: 

(i) The Gabor system $\gab $ is a frame  in $L^2(\bR )$. 

(ii)  $\det \Ag(x,\xi)\neq 0$, for all $(x,\xi) \in \Rst^2$. 

(iii) The matrix $\Qg(x,\xi )$ has full rank for all $(x,\xi) \in \Rst^2$.  
\end{lemma}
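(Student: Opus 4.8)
\medskip
\noindent\textbf{Proof plan.}
The plan is to upgrade the Zeevi--Zibulski frame criterion for rational lattices---$\gab$ is a frame \fif\ there exist $0 < \delta \le \Delta$ with $\delta \le \abs{\det \Ag(x,\xi)} \le \Delta$ for almost every $(x,\xi) \in \bR^2$---to the pointwise, qualitative condition (ii), by using the extra regularity enjoyed by the Zak transform on the Feichtinger algebra. The starting point is the observation that $g \in M^1(\bR)$ implies that $Z_\alpha g$ is continuous on $\bR^2$: indeed $M^1(\bR)$ embeds into the Wiener amalgam space $W(C_0,\ell^1)$, whose elements have continuous Zak transform (see \cite[Chapter 8]{MR1843717}). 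Hence every entry $\Qg(x,\xi)_{jk} = Z_\alpha g(x + \alpha j/p, \xi - \beta k)\, e^{2\pi i jk/q}$ is continuous on $\bR^2$, and so are the entries of $\Ag(x,\xi) = \Qg(x,\xi)\Qg(x,\xi)^*$ and the scalar function $\det \Ag(x,\xi)$.

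Next I would exploit the quasi-periodicity relations $Z_\alpha g(x+\alpha,\xi) = e^{2\pi i \alpha \xi} Z_\alpha g(x,\xi)$ and $Z_\alpha g(x,\xi+1/\alpha) = Z_\alpha g(x,\xi)$ to verify that the shift $x \mapsto x + \alpha/p$ cyclically permutes the rows of $\Qg(x,\xi)$ up to multiplication on the left and right by unimodular diagonal matrices---this is where the relation $\alpha\beta = p/q$ is used to balance the phases---while $\xi \mapsto \xi + 1/\alpha$ leaves $\Qg(x,\xi)$ unchanged. Consequently $\abs{\det \Ag}$ is a continuous function that is $(\alpha/p)$-periodic in $x$ and $(1/\alpha)$-periodic in $\xi$, hence attains its infimum and its supremum on the compact fundamental domain $[0,\alpha/p]\times[0,1/\alpha]$. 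With this in hand, (i) $\Llra$ (ii) is immediate: if $\det \Ag$ vanishes nowhere, then $\abs{\det \Ag}$ has a strictly positive minimum $\delta$ and a finite maximum $\Delta$ on the fundamental domain, hence on all of $\bR^2$, so $\gab$ is a frame; conversely, if $\det \Ag(x_0,\xi_0) = 0$ for some point, then by continuity $\abs{\det \Ag} < \delta$ on an open neighbourhood of $(x_0,\xi_0)$, a set of positive Lebesgue measure, which contradicts the lower bound in the Zeevi--Zibulski criterion.

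Finally, (ii) $\Llra$ (iii) is pointwise linear algebra carried out at each fixed $(x,\xi)$. Since $p \le q$, the $p \times q$ matrix $\Qg(x,\xi)$ has full rank exactly when $\operatorname{rank}\Qg(x,\xi) = p$; and because $\ker(\Qg\Qg^*) = \ker(\Qg^*)$ one always has $\operatorname{rank}(\Qg\Qg^*) = \operatorname{rank}(\Qg)$, so the Hermitian positive semidefinite Gram matrix $\Ag(x,\xi)$ is invertible \fif\ $\Qg(x,\xi)$ has full rank $p$. There is no deep obstacle in this lemma; the only points needing care are the continuity of the Zak transform on $M^1(\bR)$ and the bookkeeping that promotes $\abs{\det \Ag}$ to a genuinely doubly periodic function, after which the statement follows from the Zeevi--Zibulski frame characterization together with elementary compactness and linear algebra.
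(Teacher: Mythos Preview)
Your proposal is correct and follows precisely the route the paper indicates: the paper does not give a formal proof of Lemma~\ref{l:frame1} but merely remarks that ``the continuity and quasi-periodicity of $Z_g$ yield the following simple characterization,'' and your argument is exactly an elaboration of that sentence---continuity of $Z_\alpha g$ for $g\in M^1$, quasi-periodicity promoting $\abs{\det\Ag}$ to a doubly periodic continuous function, compactness of a fundamental domain, and the Zeevi--Zibulski criterion. The pointwise linear-algebra equivalence (ii)~$\Llra$~(iii) is handled the same way in both.
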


Note the subtle difference in conditions (ii)! A single zero of $\det
\Ag $ may destroy the frame property of $\gab $. 

We mention that in both cases one may write a formal reconstruction of
$f\in \ltwo $ from the correlations $\langle f, M_{\beta l } T_{\alpha
  k}g\rangle , k,l \in \bZ $ by inverting \eqref{frameopzak} as
follows:
\begin{equation}
  \label{eq:c7}
  f = \alpha ^{-d} \, \overrightarrow{Z_{\alpha}} \inv \Big( \Ag (x,\xi )\inv
  \overrightarrow{Z_{\alpha}}S f(x, \xi) \Big) \, .  
\end{equation}
If $\gab $ is a frame, then this reconstruction is stable, whereas for
a complete Gabor system this reconstruction may lead to instabilities
on certain subspaces of $\ltwo $, because $\Ag $ is not invertible
everywhere. For the classical coherent states $\phi (x) = e^{-\pi
  x^2}$ and $\alpha = \beta =1$ explicit reconstruction formulas are
known, see e.g.~\cite{Ner06}.

\subsection{Windows in the Gelfand-Shilov class}

For windows in the Gelfand-Shilov class $S^{1,1}$ the characterization
of completeness can be reformulated in a useful way, which  we now describe.

We say that a function $g:\Rst \to \bC$ is in the Gelfand-Shilov class $\saa(\Rst)$, if
$g$ and its Fourier transform $\hat{g}$ have exponential decay, i.e.,
$$
|g(x)| \lesssim   e^{-a|x|} \qquad  \forall x\in \R  $$
and 
\begin{equation}
  \label{eq:c11}
|\hat{g}(\xi)| \lesssim  e^{-b|\xi|} \qquad  \forall \xi \in \R \, 
\end{equation}
for some decay constants $a,b>0$. 
This is not the standard definition but a simpler equivalent condition
due to ~\cite{MR1265470}. 
In particular, every function of the form $g=R \phi $, where $R$ is a
rational function with no poles on the real axis and $\phi (x) =
e^{-\pi x^2}$, belongs to  the
Gelfand-Shilov class $\saa $.  Likewise,  $\saa$ includes functions
of the form 
\[
g(x)= \sum_{k=1}^n c_k e^{2\pi i b_k x} \phi(x-a_k),
\]
where $a_1,\ldots,a_n,b_1,\ldots,b_n \in \Rst$.

The assumption $g\in \saa $ implies that $g$ is real analytic. More precisely,
by the theorem of Paley-Wiener every $g\in \saa $ extends to a
function that is analytic on the strip $S_b = \{ z\in \bC : |\Im z| <
b\}$, where $b$ is the decay constant in \eqref{eq:c11}. Moreover, for every $b' \in (0,b)$ there exists a constant
$C_{b'}$ such that
\begin{align*}
\abs{g(x+iy)} \leq C_{b'} e^{-a|x|} \qquad  x,y\in \R
\mbox{ with } \abs{y} \leq b'.
\end{align*}
See for example \cite[Theorem 3.9]{MR1322917}

We now observe that the Zak transform of $g\in \saa $ can also be extended to an analytic function
of two variables.

\begin{lemma}
\label{lemma_zak_analytic}
  If $g\in \saa(\Rst)$, then the Zak transform $Z_\alpha g$ can be extended
  to an analytic function on $S_b \times S_b \subseteq \bC ^2$ for some $b>0$.
\end{lemma}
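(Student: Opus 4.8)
The plan is to write the Zak transform as a series in the complexified variables and check that it converges locally uniformly on $S_b \times S_b$, so that the sum is holomorphic by the Weierstrass theorem. Recall that for $g \in \saa(\Rst)$ we have, by the Paley--Wiener bound recorded above, a holomorphic extension of $g$ to the strip $S_b$ with the estimate $\abs{g(x+iy)} \leq C_{b'} e^{-a\abs{x}}$ uniformly for $\abs{y} \leq b'$, valid for each $b' \in (0,b)$. I would define, for $z = x+iy$ and $w = u+iv$ with $\abs{y}, \abs{v} < b$,
\begin{equation*}
Z_\alpha g(z,w) = \sum_{k \in \bZ} g(z - \alpha k) e^{2\pi i \alpha k w},
\end{equation*}
which for real $(z,w)$ agrees with the original Zak transform; here one must make sense of $g(z-\alpha k)$ via the holomorphic extension, which is legitimate since $z - \alpha k$ stays in the strip $S_b$ as $k$ ranges over $\bZ$ (only the real part moves).

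Next I would establish locally uniform convergence. Fix $b' \in (0,b)$ and a compact set $K \subseteq S_{b'} \times S_{b'}$; then there is $M>0$ with $\abs{x} \leq M$ and $\abs{u} \leq M$ for $(x+iy, u+iv) \in K$. The $k$-th term is bounded by
\begin{equation*}
\abs{g(z - \alpha k)} \, \abs{e^{2\pi i \alpha k w}} \leq C_{b'} e^{-a\abs{x - \alpha k}} e^{-2\pi \alpha k v} \leq C_{b'} e^{-a\abs{x-\alpha k}} e^{2\pi \alpha \abs{k} b'}.
\end{equation*}
For $\abs{k}$ large, $\abs{x - \alpha k} \geq \alpha \abs{k} - M$, so the exponent behaves like $-(a\alpha - 2\pi\alpha b')\abs{k}$, which decays provided $b' < a/(2\pi)$. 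Therefore I would, at the outset, shrink $b$ so that in addition to being a strip of analyticity for $g$ one has $b < a/(2\pi)$ (equivalently $b < \min(a/(2\pi), b_{\mathrm{PW}})$); with this choice the series converges absolutely and uniformly on compact subsets of $S_b \times S_b$. Since each term $g(z-\alpha k)e^{2\pi i \alpha k w}$ is holomorphic in $(z,w)$ on $S_b \times S_b$, the Weierstrass convergence theorem gives that $Z_\alpha g$ is holomorphic there, and by construction it restricts to the usual Zak transform on $\Rst^2$, proving the lemma.

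The only real subtlety—hardly an obstacle—is the interplay between the two decay constants: the modulation factor $e^{2\pi i \alpha k w}$ grows exponentially in $k$ when $w$ has nonzero imaginary part, so one cannot take the strip in the second variable as wide as the strip of analyticity of $g$; it must be narrow enough that the spatial decay $e^{-a\abs{x-\alpha k}}$ of the translates wins. This is exactly why the statement only claims analyticity on $S_b \times S_b$ ``for some $b>0$'' rather than for the full Paley--Wiener strip. Everything else is a routine application of standard facts about locally uniform limits of holomorphic functions, together with the decay estimates already quoted in the excerpt.
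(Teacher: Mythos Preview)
Your proposal is correct and follows essentially the same argument as the paper: write the Zak transform as a series in the complexified variables, shrink $b$ so that the spatial decay of $g$ beats the exponential growth of the modulation factor in the imaginary direction, and conclude analyticity from locally uniform convergence. The only difference is normalization of the decay constant (the paper writes the decay as $e^{-2\pi a|x|}$ and requires $b<a$, while you write $e^{-a|x|}$ and require $b<a/(2\pi)$), which is purely cosmetic.
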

\begin{proof}
Since $g\in \saa$, there exists $b>0$ such that $g$ extends analytically to $S_b$ and satisfies the uniform 
decay estimate:
\begin{align*}
\abs{g(x+iy)} \lesssim e^{-2 \pi a \abs{x}} \qquad x \in \Rst, \abs{y} < b,
\end{align*}
for some $a>0$.
Without loss of generality, we assume that $b < a$. Let us show that $Z_\alpha g$ extends analytically to $S_b \times 
S_b$. It suffices to show that the series
\begin{align*}
Z_\alpha g(z, w) = \sum_{k \in \mathbb{Z}} g(z- \alpha k) e^{2 \pi i
  \alpha w k} 
\end{align*}
converges uniformly and absolutely on compact subsets of $S_b \times S_b$. Fix 
$C>0$ and let $(z,w) \in \bC^2$ with 
$\abs{z}, \abs{w} \leq C$ and $\abs{\Im z}, \abs{\Im w} \leq b$. Then
\begin{align*}
&\sum_{k \in \mathbb{Z}} \abs{g(z- \alpha k) e^{2 \pi i \alpha w k}}
\lesssim 
\sum_{k \in \mathbb{Z}} e^{-2 \pi a \abs{z- \alpha k}} e^{2 \pi \alpha
  \abs{\Im w} \abs{k}} \\
 & \qquad \leq
\sum_{k \in \mathbb{Z}}
e^{-2\pi (a \alpha \abs{k} - a\abs{z} - \alpha b \abs{k})}
\leq 
e^{2\pi aC} \sum_{k \in \mathbb{Z}}
e^{-2\pi \alpha (a-b) \abs{k}}
<+\infty , 
 \end{align*}
since $a> b $. This completes the proof.
\end{proof}

Using the analyticity of the Zak transform, we now  obtain the following characterization for
the completeness of a Gabor system with windows $g \in \saa$.

\begin{prop} \label{propan}
  Let $g\in \saa(\Rst)$ and $\alpha \beta  =p/q \in \bQ $, $\alpha \beta  \leq 1$.
Then the following are equivalent.

(i) $\gab $ fails to be complete. 

(ii) $\det \Ag(x,\xi) =0 $ for all $x  , \xi \in \bR $. 

(iii) $\Qg(x,\xi )$ has rank $ < p$ for all $x,\xi \in \bR $, in other
words,  all $p\times p$ submatrices of $\Qg(x,\xi )$ have determinant
zero. 
\end{prop}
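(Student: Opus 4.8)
The plan is to combine the Zeevi--Zibulski criterion of Lemma~\ref{l:comp1} with the analytic extension of the Zak transform from Lemma~\ref{lemma_zak_analytic}, the point being to upgrade the ``almost everywhere'' statements of Lemma~\ref{l:comp1} to ``everywhere'' statements; an elementary linear-algebra identity then takes care of the equivalence between the two matrix conditions.

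First I would dispose of (ii)~$\Leftrightarrow$~(iii), which is pointwise linear algebra and does not use $g\in\saa$. Since $\alpha\beta=p/q\leq 1$ we have $p\leq q$, so $\Qg(x,\xi)$ is a $p\times q$ matrix with $p\leq q$ and $\Ag(x,\xi)=\Qg(x,\xi)\Qg(x,\xi)^*$. By the Cauchy--Binet formula,
\[
\det\Ag(x,\xi)\;=\;\sum_{S}\bigl|\det\Qg(x,\xi)_S\bigr|^2 ,
\]
where $S$ ranges over the $p$-element subsets of $\{0,\dots,q-1\}$ and $\Qg(x,\xi)_S$ denotes the $p\times p$ submatrix of $\Qg(x,\xi)$ formed by the columns indexed by $S$. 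Hence, at any fixed $(x,\xi)$, the condition $\det\Ag(x,\xi)=0$ holds if and only if every $p\times p$ minor of $\Qg(x,\xi)$ vanishes, i.e.\ if and only if $\Qg(x,\xi)$ has rank $<p$. The implication (ii)~$\Rightarrow$~(i) is then immediate: if $\det\Ag$ vanishes identically on $\bR^2$, it vanishes almost everywhere, and Lemma~\ref{l:comp1} gives that $\gab$ is not complete.

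The substance of the proposition is (i)~$\Rightarrow$~(ii). Assume $\gab$ is not complete. By Lemma~\ref{l:comp1}(ii) we have $\det\Ag(x,\xi)=0$ for a.e.\ $(x,\xi)\in\bR^2$, and by the Cauchy--Binet identity each minor $\det\Qg(\cdot)_S$ vanishes a.e.\ on $\bR^2$. Now analyticity enters. By Lemma~\ref{lemma_zak_analytic}, $Z_\alpha g$ extends holomorphically to $S_b\times S_b$ for some $b>0$; since each entry $\Qg(x,\xi)_{jk}=Z_\alpha g(x+\tfrac{\alpha j}{p},\,\xi-\beta k)\,e^{2\pi ijk/q}$ is obtained from $Z_\alpha g$ by translating each argument by a real number and multiplying by a unimodular constant --- operations that preserve the strip $S_b$ --- every entry of $\Qg$ extends holomorphically to $S_b\times S_b$, hence is real-analytic on $\bR^2$. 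Consequently each minor $\det\Qg(\cdot)_S$, being a polynomial in these entries, is real-analytic on the connected set $\bR^2$. A real-analytic function on a connected open subset of $\bR^2$ that vanishes on a set of positive Lebesgue measure is identically zero (the zero set of a nonzero real-analytic function has measure zero); applying this to each $\det\Qg(\cdot)_S$ shows that all these minors vanish on all of $\bR^2$, so $\det\Ag(x,\xi)=0$ for every $(x,\xi)\in\bR^2$, which is (ii). Together with (ii)~$\Leftrightarrow$~(iii) this proves the stated equivalences.

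The only delicate point --- and the main obstacle, albeit a routine one --- is the passage from ``vanishing a.e.'' to ``vanishing everywhere'' for the minors of $\Qg$; this relies entirely on the holomorphic extension of the Zak transform furnished by Lemma~\ref{lemma_zak_analytic}. (Alternatively one could argue directly in several complex variables: a function holomorphic on a domain $D\subseteq\bC^2$ that vanishes on a subset of $D\cap\bR^2$ of positive two-dimensional measure must vanish on $D$.) Everything else is the Zeevi--Zibulski characterization combined with Cauchy--Binet.
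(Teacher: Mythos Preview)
Your approach is essentially the paper's: combine Lemma~\ref{l:comp1} with the analyticity from Lemma~\ref{lemma_zak_analytic}, and use that a real-analytic function vanishing on a set of positive measure vanishes identically. The explicit use of Cauchy--Binet for (ii)$\Leftrightarrow$(iii) is a clean touch; the paper simply declares that equivalence ``clear''.

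There is one slip in your (i)$\Rightarrow$(ii). The negation of ``$\det\Ag\neq 0$ a.e.''\ is \emph{not} ``$\det\Ag=0$ a.e.''; it is only ``$\det\Ag=0$ on a set of positive measure''. From incompleteness and Lemma~\ref{l:comp1} you therefore obtain a set $E\subseteq\bR^2$ of positive measure on which $\det\Ag$ vanishes, and Cauchy--Binet then forces every minor $\det\Qg(\cdot)_S$ to vanish on this same $E$. Since the principle you actually invoke afterwards is ``real-analytic and zero on a set of positive measure $\Rightarrow$ identically zero'', the argument goes through verbatim once the intermediate claim is corrected. (The paper sidesteps Cauchy--Binet at this stage: for each fixed minor it observes that if its zero set had measure zero, that single minor would witness full rank of $\Qg$ a.e., contradicting incompleteness via Lemma~\ref{l:comp1}(iii). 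The paper also spells out the ``positive measure $\Rightarrow$ everything'' step with a short Fubini-type argument rather than quoting it.)
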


\begin{proof}
The equivalence between $(ii)$ and $(iii)$ is clear. In addition, by Lemma
\ref{l:comp1}, $(iii)$ implies $(i)$. Let us show that $(i)$ implies
$(iii)$. Assume that $\gab$ is incomplete,  and  let $l_0, \dots, l_{p-1} \in
\{ 0, \dots , q-1\}$ be distinct indices, and
denote by $\Qg^{l_0, \dots, l_{p-1}}$ the  $p \times p$
submatrix of $\Qg(x,\xi ) $ with columns  $l_0, \dots, l_{p-1}$. Let
\begin{align*}
E := \set{(x,\xi) \in \Rst^2}{\det 
\Qg^{l_0, \dots, l_{p-1}}(x,\xi)=0}. 
\end{align*}
We will  show that $E=\Rst^2$. By Lemma \ref{l:comp1}, we know that $E$ has positive Lebesgue measure,
since, otherwise, $\Qg(x,\xi)$ would have full rank almost everywhere.
In addition, by Lemma \ref{lemma_zak_analytic}, there exists $b>0$ such that
$Z_\alpha$ extends analytically to $S_b \times S_b$ and, therefore, so does $\det \Qg^{l_0, \dots, l_{p-1}}$.
Hence, $E$ is the zero set of a real analytic function on $\Rst^2$, and, having positive measure, it must be equal to 
$\Rst^2$. We provide a short argument for this (known)  fact.

By the continuity of $\det \Qg^{l_0, \dots, l_{p-1}}$, we know that $E$ is a closed set.
Hence, in order to prove that $E=\Rst^2$, it suffices to show that $|\Rst^2 \setminus E| = 0$.
Given $x \in \Rst$, the section
\begin{align*}
E_x := \set{\xi \in \Rst}{(x,\xi) \in E}
\end{align*}
is the zero set of the function $\det \Qg^{l_0, \dots, l_{p-1}}(x,\cdot):\Rst \to \bC$, which admits an 
analytic extension to $S_b$. Hence, $E_x$ is either $\Rst$ or has measure zero.
Therefore, $E=(X \times \Rst) \cup N_1$, for some measurable sets $X,N_1 \subseteq \Rst$ with $|N_1|=0$ and $|X|>0$.
A similar argument, with the roles of $x$ and $\xi $ interchanged,
shows  that $E=(\Rst\times Y) \cup N_2$, with $|Y|>0$ and
$|N_2|=0$. The conditions on $E$ imply that 
$1_E(x,y)=1_X(x)=1_Y(y)$  for almost all $(x,y) \in \Rst^2$. Since  $\abs{E}>0$,
this is only possible if $1_E \equiv 1$ a.e. Therefore, $|\Rst^2 \setminus E| = 0$, as desired.
\end{proof}

Motivated by Proposition \ref{propan}, we compute explicitly the determinant of $ \Qg^{l_0,
  \dots, l_{p-1}} (x,\xi ) $ for a given selection of columns $L \equiv \{ l_0 , \dots , l_{p-1}\} \subseteq
\{ 0, \dots , q-1\}$.

Let $\mathrm{Perm} (l_0,
  \dots, l_{p-1})$ denote the group of all permutations of the chosen
  columns. Then 
\begin{multline*}
\det \Qg^{l_0, \dots , l_{p-1}}(x,\xi )  = \sum_{\sigma \in
  \mathrm{Perm}(l_0, \dots, l_{p-1})} (-1)^{\mathrm{sgn} (\sigma)} 
\prod_{j=0}^{p-1} Z_{\alpha}g(x+ \frac{\alpha j}{p}, \xi -  \beta \sigma(l_j)) e^{2 \pi i j \sigma(l_j)/q} \\
= \sum_{\sigma \in \mathrm{Perm}(l_0, \dots, l_{p-1})}
(-1)^{\mathrm{sgn} (\sigma)}  
\sum_{k_0, \dots, k_{p-1} \in \mathbb{Z}} \prod_{j=0}^{p-1} g(x +
\frac{\alpha j}{p} - \alpha k_j) \times \\
e^{2 \pi i  \sum_{j=0}^{p-1}\big[(\xi - \beta \sigma(l_j))\alpha k_j + j \sigma(l_j)/q \big]}=0.
\end{multline*}
We sum over the permutations $\sigma$ first and denote the resulting
coefficients by 
\begin{align}
\label{eq_def_ck}
c(k_0, \dots, k_{p-1}):= 
\sum_{\sigma \in
  \mathrm{Perm}(l_0, \dots, l_{p-1})} (-1)^{\mathrm{sgn} (\sigma)} e^{2 \pi i
  \sum_{j=0}^{p-1}\big[ - p \sigma(l_j)k_j/q + j \sigma(l_j)/q \big]}
. 
\end{align}
Then 
\begin{align*}
\det \Qg^{l_0, \dots , l_{p-1}}(x,\xi )  &=  \sum_{k_0, \dots, k_{p-1} \in \mathbb{Z}} \prod_{j=0}^{p-1} g(x +
\frac{\alpha j}{p} - \alpha k_j)c(k_0, \dots , k_{p-1}) e^{2\pi i
  \alpha (k_0
  + \dots + k_{p-1} ) \xi } \\
&= \sum _{N\in \bZ } \Big( \sum_{k_0 +  \dots +  k_{p-1}  = N} \prod_{j=0}^{p-1} g(x +
\frac{\alpha j}{p} - \alpha k_j)c(k_0, \dots , k_{p-1}) \Big) e^{2\pi
  i N  \xi } \, .
\end{align*}
Thus for fixed $x \in [0, \alpha ]$ the determinant $\det
\Qg^{l_0, \dots , l_{p-1}}(x,\xi ) $ is a Fourier series
with coefficients 
\begin{equation} \label{sum2}
\expg:=\sum_{k_0 + \dots+ k_{p-1} = N} \prod_{j=0}^{p-1} g\big(x+ \alpha j/p
- \alpha k_j\big)c(k_0, \dots, k_{p-1}).
\end{equation}
Note that the
selection of columns of $\Qg ^{l_0, \dots , l_{p-1}}(x,\xi )$ enters
only in the coefficients $c(k_0, \dots, k_{p-1})$. 

We now state one more reformulation of Lemma~\ref{l:comp1}. 

\begin{lemma} \label{l:comsaa}
Let $g \in \saa(\Rst) $ and $\alpha \beta = p/q \in \bQ $, $\alpha \beta \leq 1$.  Then the following
are equivalent.

(i) The Gabor system $\gab $ is incomplete in $\ltwo $. 

(ii) For all choices of subsets $L\equiv \{l_0, \dots, l_{p-1}\} \subset \{0, \dots, q-1 \}$, 
all $x\in \bR$ and all $N \in \Zst$, we have $\expg=0$.
\end{lemma}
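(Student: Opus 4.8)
The plan is to read the equivalence off from Proposition~\ref{propan} together with the explicit computation of $\det\Qg^{l_0,\dots,l_{p-1}}(x,\xi)$ carried out just before the statement. By Proposition~\ref{propan}, $\gab$ is incomplete if and only if, for every choice of $p$ columns $L=\{l_0,\dots,l_{p-1}\}\subset\{0,\dots,q-1\}$, the submatrix $\Qg^{l_0,\dots,l_{p-1}}(x,\xi)$ has vanishing determinant at every point $(x,\xi)\in\bR^2$; only the vanishing of the determinant matters here, so the order in which the columns are listed is irrelevant. Thus it suffices to show, for a fixed $L$ and a fixed $x$, that $\xi\mapsto\det\Qg^{l_0,\dots,l_{p-1}}(x,\xi)$ vanishes identically on $\bR$ exactly when $\expg=0$ for all $N\in\bZ$, and then to let $x$ and $L$ vary.

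For this I would use that the computation preceding the lemma exhibits $\xi\mapsto\det\Qg^{l_0,\dots,l_{p-1}}(x,\xi)$ as a trigonometric series in $\xi$ whose $N$-th coefficient is $\expg$, as given by \eqref{sum2}. The one routine point is to check that this series converges absolutely and uniformly in $\xi$; this is the only place the hypothesis $g\in\saa$ is used, and only through the decay of $g$. Bounding $|c(k_0,\dots,k_{p-1})|$ by the number $p!$ of permutations appearing in \eqref{eq_def_ck} and summing the product over all integer tuples gives $\sum_{N\in\bZ}|\expg|\le p!\prod_{j=0}^{p-1}\big(\sum_{k\in\bZ}|g(x+\alpha j/p-\alpha k)|\big)<\infty$, since $g$ has exponential decay. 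The same estimate legitimizes the rearrangements used in the computation: expanding the product of the $p$ Zak transforms into a multiple series, exchanging the finite permutation sum with the sum over the $k_j$, and regrouping the result according to $N=k_0+\dots+k_{p-1}$. Consequently $\det\Qg^{l_0,\dots,l_{p-1}}(x,\cdot)$ is a continuous periodic function whose Fourier coefficients are precisely the numbers $\expg$, and by uniqueness of Fourier coefficients it vanishes identically in $\xi$ if and only if every $\expg$ is zero. Letting $x$ and $L$ range and invoking Proposition~\ref{propan} then gives the equivalence of (i) and (ii).

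I do not expect a genuine obstacle here: the statement is essentially a bookkeeping reformulation of Proposition~\ref{propan}, and the analytic input --- that a.e.\ vanishing of $\det\Ag$ forces vanishing everywhere --- has already been absorbed into Proposition~\ref{propan}. One small point worth recording is that it suffices to test $x$ only on the fundamental domain $[0,\alpha]$: by the quasi-periodicity of the Zak transform, replacing $x$ by $x+\alpha$ multiplies the $k$-th column of $\Qg(x,\xi)$ by a nonzero unimodular scalar, and hence the determinant $\det\Qg^{l_0,\dots,l_{p-1}}(x,\xi)$ by a nonvanishing factor, so it does not change whether this determinant vanishes; equivalently, $\expg$ obeys a quasi-periodicity relation in $x$. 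Thus the condition in (ii) is the same whether $x$ is allowed to range over all of $\bR$ or only over $[0,\alpha]$.
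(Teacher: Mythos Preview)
Your proposal is correct and follows exactly the approach the paper intends: the paper gives no separate proof of this lemma, treating it as an immediate reformulation of Proposition~\ref{propan} via the Fourier expansion of $\det\Qg^{l_0,\dots,l_{p-1}}(x,\cdot)$ computed just before the statement. You have simply supplied the routine justifications (absolute convergence, uniqueness of Fourier coefficients, the quasi-periodicity remark) that the paper omits.
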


\subsection{Complete Gabor Systems versus Gabor Frames}
Although the formulations of Lemma~\ref{l:comp1} and \ref{l:frame1}
 are deceptively similar, the
difference between completeness and the frame property is dramatic. To
highlight this difference, we show that  the spanning properties of a
Gabor family over a lattice are extremely stable and cannot  
be modified  by naive changes. In order to state the result precisely
(for arbitrary dimension), we recall
that the \emph{lower Beurling density} of a set $N \subseteq \Rst^d$
is given by 
\begin{align*}
D^{-}(N) := \liminf_{R \longrightarrow +\infty} \inf_{x \in \Rst^d}
\frac{\#(N \cap B_R(x))}{\abs{B_R(x)}} \, ,
\end{align*}
where $B_R(x)$ denotes the ball of radius $R$ centered at $x\in \rd
$. Hence $D^{-}(N)=0$ if and only if $N$ contains arbitrarily large holes.

\begin{prop}
\label{prop_no_sup}
  Assume that $g\in M^1(\Rdst)$, $\Lambda \subseteq \Rtdst$ is a lattice  
  and  $\gablam $ is not a frame. If $N \subseteq \rdd $ is a set such that $D^-(N) = 0$, then $\cG (g, \Lambda \cup
  N)$ fails to be a frame. 
\end{prop}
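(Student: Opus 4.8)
The plan is to prove that $\cG(g,\latt\cup N)$ cannot satisfy a lower frame bound, by constructing, for every $\varepsilon>0$, a unit vector $h_\varepsilon\in L^2$ with $\sum_{\mu\in\latt\cup N}|\langle h_\varepsilon,\pi(\mu)g\rangle|^2<C\varepsilon$; here I write $\pi(x,\omega):=M_\omega T_x$ and $V_gf(z):=\langle f,\pi(z)g\rangle$. We may assume $g\neq 0$ and, since a non-Bessel family is certainly not a frame, that $\cG(g,\latt\cup N)$ is a Bessel sequence with some bound $B$; then $\cG(g,\latt)$ and $\cG(g,N)$ are Bessel with bound at most $B$ as well. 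I will need two facts. (a) \emph{$N$ is locally finite, uniformly.} Applying the Bessel estimate for $\cG(g,N)$ to the vectors $\pi(z)g$ gives $\sum_{\nu\in N}|V_gg(\nu-z)|^2\le B\|g\|_2^2$ for every $z\in\rdd$; since $V_gg$ is continuous and $V_gg(0)=\|g\|_2^2>0$, every ball of a fixed small radius contains at most $2B/\|g\|_2^2$ points of $N$, and hence $m_N:=\sup_z\#\big(N\cap(z+[0,1)^{2d})\big)<\infty$. (b) \emph{Localized approximate null vectors for $S_\latt$.} Since $\cG(g,\latt)$ is Bessel but not a frame, the positive bounded operator $S_\latt$ is not bounded below, so for each $\varepsilon>0$ there is $f$ with $\|f\|_2=1$ and $\sum_{\lambda\in\latt}|V_gf(\lambda)|^2<\varepsilon$; approximating $f$ in $L^2$ by an element of $M^1$ and using the Bessel bound of $\cG(g,\latt)$, I may take $f=f_\varepsilon\in M^1$ with $\|f_\varepsilon\|_2\asymp 1$ and $\sum_{\lambda\in\latt}|V_gf_\varepsilon(\lambda)|^2\lesssim\varepsilon$. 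Since $f_\varepsilon,g\in M^1$, the function $V_gf_\varepsilon$ lies in the Wiener amalgam space $W(L^\infty,\ell^1)(\rdd)$ \cite{MR1843717}, so $\Theta_\varepsilon(R):=\sum_{|k|\ge R}\sup_{w\in k+[0,1)^{2d}}|V_gf_\varepsilon(w)|^2\to 0$ as $R\to\infty$.

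The core of the argument exploits the lattice structure. Let $\rho_\latt:=\sup_z\mathrm{dist}(z,\latt)<\infty$ denote the covering radius of $\latt$. Since $D^{-}(N)=0$, the set $N$ has arbitrarily large holes, so I may pick a ball $B_{R_\varepsilon}(z_0)$ disjoint from $N$ with $R_\varepsilon$ as large as desired; choosing $R_\varepsilon$ large enough (depending on $\varepsilon$) that $m_N\,\Theta_\varepsilon(\tfrac14 R_\varepsilon)<\varepsilon$, and then $\mu_\varepsilon\in\latt$ with $|z_0-\mu_\varepsilon|\le\rho_\latt$, I set $h_\varepsilon:=\|f_\varepsilon\|_2^{-1}\,\pi(\mu_\varepsilon)f_\varepsilon$, which is a unit vector. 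By the covariance $|V_g(\pi(\mu_\varepsilon)f_\varepsilon)(w)|=|V_gf_\varepsilon(w-\mu_\varepsilon)|$ and the identity $\latt-\mu_\varepsilon=\latt$, the coefficients over $\latt$ are unchanged: $\sum_{\lambda\in\latt}|\langle h_\varepsilon,\pi(\lambda)g\rangle|^2=\|f_\varepsilon\|_2^{-2}\sum_{\lambda\in\latt}|V_gf_\varepsilon(\lambda)|^2\lesssim\varepsilon$. On the other hand, each $\nu\in N$ obeys $|\nu-\mu_\varepsilon|\ge R_\varepsilon-\rho_\latt$, so for $R_\varepsilon$ large the translated points $\nu-\mu_\varepsilon$ fall into unit cubes $k+[0,1)^{2d}$ with $|k|\ge\tfrac14 R_\varepsilon$, each meeting $N-\mu_\varepsilon$ in at most $m_N$ points; hence $\sum_{\nu\in N}|\langle h_\varepsilon,\pi(\nu)g\rangle|^2\lesssim m_N\,\Theta_\varepsilon(\tfrac14 R_\varepsilon)<\varepsilon$. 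Adding the two estimates shows that the lower frame bound fails for $\cG(g,\latt\cup N)$, which is therefore not a frame.

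The main obstacle --- and the place where the hypotheses really get used --- is the passage from an approximate null vector of $S_\latt$ to an approximate null vector of the full frame operator: this works only because translation by a \emph{lattice} vector is a symmetry of $S_\latt$ (so the $\latt$-part stays small), while the $M^1$-regularity of $f_\varepsilon$, i.e.\ the amalgam decay of $V_gf_\varepsilon$, lets the $N$-part be made arbitrarily small once the vector has been moved far from $N$, which is possible precisely because $D^{-}(N)=0$ supplies holes of unbounded size. The remaining steps --- the two counting/amalgam estimates and the fact that $V_gf_\varepsilon\in W(L^\infty,\ell^1)$ --- are standard.
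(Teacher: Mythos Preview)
Your argument is correct, and it shares the same geometric core with the paper's proof: both exploit that translating by a lattice point is a symmetry of $S_\Lambda$, and that $D^-(N)=0$ supplies arbitrarily large holes into which one can slide. The implementations, however, are genuinely different.

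The paper's proof is a two-line contradiction using the deformation result \cite[Theorem~5.1]{grorro15}: if $\cG(g,\Lambda\cup N)$ were a frame, then every weak limit of translates $(\Lambda\cup N)-\lambda_n$ would also produce a frame; choosing $\lambda_n\in\Lambda$ inside larger and larger holes of $N$ makes $(\Lambda\cup N)-\lambda_n$ converge weakly to $\Lambda$, so $\cG(g,\Lambda)$ would be a frame, contradicting the hypothesis. Your proof instead unpacks this compactness argument by hand: you manufacture approximate null vectors $f_\varepsilon\in M^1$ for $S_\Lambda$, shift them by a lattice point $\mu_\varepsilon$ near a large hole, and use the amalgam decay $V_gf_\varepsilon\in W(L^\infty,\ell^1)$ together with the uniform local finiteness of $N$ (which you correctly extract from the Bessel bound) to kill the $N$-contribution. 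This is essentially a direct, quantitative special case of the deformation theorem.

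What you gain is self-containedness: your proof needs only standard modulation-space facts from \cite{MR1843717} and avoids the weak-convergence machinery of \cite{grorro15}. What the paper's route buys is brevity and a clearer conceptual picture---the obstruction is precisely that $\Lambda$ sits in the weak closure of translates of $\Lambda\cup N$. Both arguments use the lattice hypothesis in exactly the same place (so that $\Lambda-\mu_\varepsilon=\Lambda$), and both use $g\in M^1$ to control tails; neither can be easily relaxed.
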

\begin{proof}
Assume on the contrary that $\cG (g, \Lambda \cup  N)$ is a frame. The proof is based on Theorem 5.1 from 
\cite{grorro15}, which implies that
for every set $\Gamma$ that is a weak limit of translates of $\Lambda \cup  N$, $\cG (g, \Gamma)$ is also a frame. (We 
refer the reader to \cite{grorro15} for the precise definition of weak convergence of sets.)

Since $D^-(N) = 0$, $N$ contains arbitrarily large holes centered at points of $\Lambda$.
 For every  $n \in \Nst$  
there exists $\lambda_n \in \Lambda$ such that $N \cap B_n(\lambda_n) = \emptyset$. This implies that the sequence
of translates $\sett{(\Lambda \cup N)-\lambda_n: n \in \Nst}$ converges weakly to $\Lambda$. Indeed
\begin{align*}
\left(
(\Lambda \cup N)-\lambda_n
\right) \cap B_n(0) =
\left(
\Lambda \cup (N-\lambda_n)
\right) \cap B_n(0) = \Lambda \cap B_n(0).
\end{align*}
 By \cite[Theorem 5.1]{grorro15}  $\cG (g, \Lambda)$ is a frame, contradicting our assumptions.
\end{proof}

Proposition \ref{prop_no_sup} means that in order to extend a complete Gabor family (that is not already a frame) 
into a (non-uniform) Gabor frame, we need to add a set of strictly positive density. This complements the 
results of Balan, Casazza, Heil and Landau that also stress the strong rigidity of the frame property:
for a Gabor frame $\gablam$ with window $g \in M^1$, it is always possible to remove a certain infinite subset 
$\Lambda' \subseteq \Lambda$ while preserving the frame property. Moreover, 
it is possible to choose $\Lambda'$ so that the density of the remaining set $\Lambda \setminus \Lambda'$ is 
arbitrarily close to 1 \cite{MR2837145}.

\section{Explicit Completeness Theorems}

In this section we investigate the completeness  of Gabor systems $\gab $
for several general,  explicit classes of window functions  in $\saa
$ on a rectangular rational lattice.  
In all cases we will make use of the fact that the Gabor system $\cG
(\phi , \alpha ,  \beta )$ for $\phi (x) = e^{-\pi x^2}$ is complete
when $\alpha \beta \leq 1$ and then apply the algebraic
characterization of Lemma~\ref{l:comsaa}.

 We start with the
class of  windows that factor into a polynomial and the Gaussian. 
We first  prove that,  for an arbitrary polynomial $P$, the
function  $g(x)=P(x)e^{-\pi x^2}$ generates a complete 
system when $\alpha \beta$ is rational and $  \alpha \beta \leq 1$.

\begin{tm} \label{t:pol}
Let $g(x)= P(x) e^{-\pi x^2}$ with a non-zero polynomial  $P$ and $\alpha \beta \in \bQ $, $\alpha \beta \leq 1$.   
Then  the
Gabor system $\mathcal{G}(g, \alpha, \beta)$ is 
complete in $L^2(\mathbb{R})$.
\end{tm}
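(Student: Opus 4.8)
The plan is to argue by contradiction, combining the algebraic reformulation of incompleteness in Lemma~\ref{l:comsaa} with the classical fact that $\cG(\phi,\alpha,\beta)$ is complete for $\alpha\beta\le1$, where $\phi(x)=e^{-\pi x^{2}}$. Suppose $\cG(P\phi,\alpha,\beta)$ is incomplete. Since $P\phi\in\saa$, Lemma~\ref{l:comsaa} gives $\Theta_{P\phi}^L(x,N)=0$ for every $L=\{l_0,\dots,l_{p-1}\}\subset\{0,\dots,q-1\}$, every $x\in\bR$ and every $N\in\bZ$; crucially this vanishing holds at \emph{every} real $x$ (a consequence of the analyticity from Lemma~\ref{lemma_zak_analytic} and Proposition~\ref{propan}), not merely almost everywhere. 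I will deduce that $\Theta_\phi^L\equiv0$ as well, contradicting the completeness of $\cG(\phi,\alpha,\beta)$.

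The first step is to translate the polynomial factor inside the window. For $s\in\bR$ the window $P(\cdot+s)\phi$ is again a polynomial times $\phi$, and I claim the identity
\begin{equation*}
\Theta_{P(\cdot+s)\phi}^L(x,N)=e^{2\pi s\kappa(x,N)+\pi p s^{2}}\,\Theta_{P\phi}^L(x+s,N),\qquad \kappa(x,N):=px+\tfrac{\alpha(p-1)}{2}-\alpha N .
\end{equation*}
This follows from formula \eqref{sum2}: writing $y_j=x+\tfrac{\alpha j}{p}-\alpha k_j$ and using $\phi(y_j)=e^{2\pi s y_j+\pi s^{2}}\phi(y_j+s)$, the factor $e^{2\pi s\sum_j y_j+\pi p s^{2}}$ pulls out of the sum because $\sum_j y_j=\kappa(x,N)$ is independent of $\mathbf{k}$ as long as $k_0+\dots+k_{p-1}=N$, and the remaining sum is exactly $\Theta_{P\phi}^L(x+s,N)$. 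Since the incompleteness hypothesis gives $\Theta_{P\phi}^L(x',N)=0$ for all $x'\in\bR$ (apply it at $x'=x+s$), we obtain $\Theta_{P(\cdot+s)\phi}^L(x,N)=0$ for all $L$, all $x\in\bR$, all $N\in\bZ$, and all $s\in\bR$.

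The second step is an asymptotic argument as $s\to+\infty$. Let $d=\deg P$ and let $a\neq0$ be its leading coefficient, so $P(t+s)/s^{d}\to a$ for each fixed $t$, and $|P(t+s)|/s^{d}\le C(1+|t|)^{d}$ uniformly for $s\ge1$. Hence the normalized windows $g_s:=s^{-d}P(\cdot+s)\phi$ satisfy $g_s(y)\to a\phi(y)$ pointwise and $|g_s(y)|\le C(1+|y|)^{d}e^{-\pi y^{2}}$ uniformly for $s\ge1$. In \eqref{sum2} the terms of the series for $\Theta_{g_s}^L(x,N)$ are then bounded, uniformly in $s\ge1$, by a sequence summable over $\{\mathbf{k}:\sum_j k_j=N\}$ (the Gaussian factors dominate the polynomial growth and $|c(\mathbf{k})|\le p!$), so dominated convergence gives $\Theta_{g_s}^L(x,N)\to a^{p}\,\Theta_\phi^L(x,N)$ as $s\to\infty$. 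On the other hand $\Theta_g^L(x,N)$ is homogeneous of degree $p$ in $g$, so $\Theta_{g_s}^L(x,N)=s^{-pd}\,\Theta_{P(\cdot+s)\phi}^L(x,N)=0$ by the first step. Therefore $\Theta_\phi^L(x,N)=0$ for all $L$, $x$ and $N$; by Lemma~\ref{l:comsaa} this forces $\cG(\phi,\alpha,\beta)$ to be incomplete, contradicting \cite{bargmann71,perelomov71}. Hence $\cG(P\phi,\alpha,\beta)$ is complete.

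The routine parts are the verification of the displayed identity (elementary Gaussian algebra) and the domination estimate for dominated convergence. The conceptual point --- and the place I expect to have to be careful --- is that incompleteness is an \emph{exact} vanishing condition valid at every $x$, which is stable under the substitution $x\mapsto x+s$ in the window even though that substitution is realized by an unbounded operator; this stability is precisely what lets the polynomial be deformed away (its leading constant surviving, the Gaussian untouched) and reduces the statement to the known Gaussian case.
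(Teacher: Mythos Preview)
Your argument is correct. The overall strategy coincides with the paper's: argue by contradiction, invoke Lemma~\ref{l:comsaa} to get $\Theta_{P\phi}^L(x,N)\equiv 0$, and then show this forces $\Theta_\phi^L(x,N)\equiv 0$, contradicting the known completeness of $\cG(\phi,\alpha,\beta)$. The difference lies in how the polynomial factor is stripped away. The paper observes directly that $\prod_{j=0}^{p-1}P\big(x+\tfrac{\alpha j}{p}-\alpha k_j\big)$ is a polynomial in $x$ of degree $pd$ whose leading coefficient is independent of $(k_0,\dots,k_{p-1})$; after factoring out the common Gaussian in $x$ (the same computation underlying your $\kappa(x,N)$), one reads off that the coefficient of $x^{pd}$ in $\Theta_{P\phi}^L(x,N)$ is exactly $s_0(N)=\Theta_\phi^L(x,N)$ up to a nonvanishing factor, so vanishing for all $x$ kills $s_0(N)$ algebraically. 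You instead encode the same leading behavior analytically: the translation identity converts the polynomial $P$ into $P(\cdot+s)$, and after renormalizing by $s^{-d}$ the window tends to $a\phi$, with dominated convergence pushing the limit through the series for $\Theta$. Your route requires one extra ingredient (the uniform bound $|g_s(y)|\le C(1+|y|)^d e^{-\pi y^2}$ and the summability check), whereas the paper's is a one-line polynomial-coefficient argument; on the other hand, your limit mechanism is precisely what the paper uses later for Theorem~\ref{t:rat}, where no finite polynomial expansion is available, so your proof of the polynomial case already anticipates the rational case.
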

\begin{proof}
We set $d:=\deg(P)$ and assume without loss of generality that the leading coefficient of $P$ is $1$. As before, we 
write $\alpha \beta = p/q$ for relatively prime positive integers.

We recall the fundamental fact  that the Gaussian $\phi (x)=e^{-\pi
  x^2}$ generates a complete Gabor system $\cG (\phi , \alpha   , \beta )$, \fif\   $\alpha \beta \leq
1$~\cite{bargmann71,perelomov71}.  We will show that  this fact implies that $\cG (g, \alpha, \beta )$
with $g = P \phi $ is also complete. 
More precisely,  we will verify the conditions of
Proposition~\ref{propan}  and show that there
exist column indices  
$L\equiv\{l_0, \dots, l_{p-1}\}$ such that the determinant of $\Qphi ^{l_0, \dots,
  l_{p-1}}$ does not vanish identically.

We argue by contradiction  and  assume that $\gab $ is not
complete. By Lemma \ref{l:comsaa},
\begin{equation}
\label{sum2b}
\expg=\sum_{k_0 + \dots+ k_{p-1} = N} \prod_{j=0}^{p-1} g\big(x+ \frac{\alpha j}{p} - \alpha k_j\big)c(k_0, \dots, 
k_{p-1}) =0 
\end{equation}
for all $x \in \mathbb{R}$ and all $N \in \mathbb{Z}$ where the
coefficients 
$c(k_0, \dots, k_{p-1})$ are given by \eqref{eq_def_ck}.

\textbf{Step 1.} We first  compute the expression $\expphi$ for the
Gaussian $\phi (x) = e^{-\pi x^2}$ in place of $g$.  The exponent of
each product is
\begin{align*}
\lefteqn{-\frac{1}{\pi} \log  \prod_{j=0}^{p-1} \phi \big(x+
  \frac{\alpha j}{p} -
\alpha k_j \big)} \\
&=  \sum_{j=0}^{p-2} \big(x+ \frac{\alpha j}{p}- \alpha k_j \big)^2 + \big( x+
\frac{\alpha(p-1)}{p}  - \alpha(N- k_0-\dots- k_{p-2} ) \big)^2 \\ 
&= px^2 + 2x \alpha \big[\sum_{j=0}^{p-2} (\frac{j}{p}-k_j ) + \frac{p-1}{p} - N + \sum_{j=0}^{p-2} k_j \big] \\
& \qquad \quad  + \alpha^2\Big[ \sum_{j=0}^{p-2} \Big(\frac{j}{p} - k_j \Big )^2 +
\Big(\frac{p-1}{p} -N+ \sum_{j=0}^{p-2} k_j \Big)^2 \Big] \\
& = px^2 +\frac{2 x \alpha}{p} \left(\sum_{j=0}^{p-1}j\right) - 2x \alpha N 
+ \alpha^2\bigg[  \sum_{j=0}^{p-2} \Big(\frac{j}{p} - k_j \Big)^2 +
\Big(\frac{p-1}{p} -N+ \sum_{j=0}^{p-2} k_j \Big)^2  \bigg] \, .
\end{align*}
Since $\cG (\phi, \alpha  , \beta )$ is complete for $\alpha \beta
\leq 1$,
there exists some
$N\in \bZ$, such that  the quantity 
\begin{align}
   e^{-\pi (p x^2 + \alpha (p-1)x - 2N\alpha x)} &  \, \sum_{k_0, \dots, k_{p-2} \in \mathbb{Z}}
\exp \Big(- \pi \alpha^2 \bigg[ \sum_{j=0}^{p-2} \bigg(\frac{j}{p} - k_j
\bigg)^2 +  \bigg(\frac{p-1}{p} -N+ \sum_{j=0}^{p-2}k_j
\bigg)^2\bigg] \Big)  \times \notag \\ 
& \quad c(k_0, \dots, k_{p-2}, N-k_0 -\dots- k_{p-2}) \not \equiv
0\, .    \label{eq:c20}
\end{align}

Thus at least  one of the coefficients
\begin{align}
  s_0(N)& =   \sum_{k_0, \dots, k_{p-2} \in \mathbb{Z}}
\exp \Big(- \pi \alpha^2 \bigg[ \sum_{j=0}^{p-2} \bigg(j/p - k_j
\bigg)^2 +  \bigg(\frac{p-1}{p} -N+ \sum_{j=0}^{p-2}k_j
\bigg)^2\bigg] \Big)  \times  \notag 
 \\ 
& \qquad c(k_0, \dots, k_{p-2}, N-k_0 -\dots- k_{p-2}) \neq  0 \,  
\label{eq:c21}
\end{align}
must be non-zero. 

\textbf{Step 2.}
We next evaluate $\expg$ for $g(x) = P(x) e^{-\pi x^2}$. Since
$P$ is a polynomial of degree $d$ with leading coefficient $1$, the
product 
$\prod_{j=0}^{p-1} P \big(x+ \alpha j/p -
\alpha k_j \big)$ is a polynomial of degree $dp$. The coefficient of
$x^{dp} $ is one, whereas the coefficients of the lower order terms
depend on $(k_0, \dots , k_{p-1})$ in a complicated way. What matters
is that the leading coefficient does not depend on $(k_0, \dots ,
k_{p-1})$. After summing over $(k_0, \dots , k_{p-1})\in \bZ ^p$ with
$k_0 + \dots +  k_{p-1} = N$, we obtain  from \eqref{sum2b} and the
calculation in Step~1 that 
\begin{multline} \label{sum3}
e^{-\pi \big[ px^2 +  x \alpha(p-1) - 2x \alpha N
  \big]} \bigg[ x^{pd} s_0(N) + x^{pd-1} s_1(N) + \dots+ x
s_{pd-1}(N) + s_{pd}(N) \bigg] = 0, 
\end{multline}
for some coefficients $s_k(N) \in \bC$,
all $N \in \mathbb{Z}$ and $x \in \Rst$. The coefficient of the leading
term is exactly $s_0(N)$ from \eqref{eq:c21} obtained for the Gaussian. 
Since we have assumed that $\gab $ is incomplete, \eqref{sum3}
vanishes identically for all $N\in \bZ $ by Lemma~\ref{l:comsaa}, and thus  all
coefficients $s_m(N), m=0, \dots, pd,$ must vanish. In particular,
$s_0(N) = 0$ for all $N\in \bZ $, contradicting \eqref{eq:c21}.

Altogether we have proved that $\gab $ is complete for rational
$\alpha \beta \leq 1$.  
\end{proof}

We single out the special case of the Hermite functions $h_n$ defined
by $h_n(x) =  
e^{\pi x^2} \tfrac{d^n}{dx^n } (e^{-2 \pi x^2}) $. 

\begin{cor} \label{cor:herm}
If $\alpha \beta \in \bQ$ and 
  $\alpha \beta \leq 1$,  then $\cG (h_n, \alpha, \beta)$  is complete in $L^2(\bR )$.
\end{cor}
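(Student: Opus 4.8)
The plan is to derive Corollary~\ref{cor:herm} directly from Theorem~\ref{t:pol}. The key observation is that each Hermite function $h_n(x) = e^{\pi x^2}\tfrac{d^n}{dx^n}(e^{-2\pi x^2})$ is exactly of the form covered by Theorem~\ref{t:pol}, namely a non-zero polynomial times the Gaussian $\phi(x) = e^{-\pi x^2}$. So the proof reduces to a short computation verifying this factorization.

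Concretely, I would write $e^{-2\pi x^2} = (e^{-\pi x^2})^2$ and repeatedly differentiate: each application of $\tfrac{d}{dx}$ to $Q(x)e^{-2\pi x^2}$ (with $Q$ a polynomial) produces $(Q'(x) - 4\pi x Q(x))e^{-2\pi x^2}$, which is again a polynomial times $e^{-2\pi x^2}$. By induction, $\tfrac{d^n}{dx^n}(e^{-2\pi x^2}) = P_n(x) e^{-2\pi x^2}$ for a polynomial $P_n$; tracking the leading term shows $\deg P_n = n$ with leading coefficient $(-4\pi)^n \ne 0$, so $P_n$ is non-zero. Therefore $h_n(x) = e^{\pi x^2} P_n(x) e^{-2\pi x^2} = P_n(x) e^{-\pi x^2}$, which is precisely the hypothesis of Theorem~\ref{t:pol} with $P = P_n$. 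Since $\alpha\beta \in \bQ$ and $\alpha\beta \le 1$ by assumption, Theorem~\ref{t:pol} immediately yields that $\cG(h_n,\alpha,\beta)$ is complete in $L^2(\bR)$.

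There is essentially no obstacle here: the corollary is a pure special case, and the only content is the elementary verification that differentiating the Gaussian-squared $n$ times yields a polynomial-times-Gaussian with non-vanishing polynomial. One could even skip the induction and simply cite the classical fact that $h_n$ is (up to normalization) the $n$-th Hermite function, which by definition equals a degree-$n$ Hermite polynomial times $e^{-\pi x^2}$. Either way, the statement follows in one line from Theorem~\ref{t:pol}, so the ``proof'' is really just an explicit note recording that the Hermite functions belong to the class already handled.
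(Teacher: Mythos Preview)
Your proposal is correct and matches the paper's approach exactly: the corollary is stated immediately after Theorem~\ref{t:pol} without a separate proof, precisely because the Hermite functions $h_n(x)=e^{\pi x^2}\tfrac{d^n}{dx^n}(e^{-2\pi x^2})$ are by construction a non-zero polynomial times $e^{-\pi x^2}$, so Theorem~\ref{t:pol} applies directly. Your short induction verifying $h_n = P_n \phi$ with $\deg P_n = n$ is a fine way to spell this out explicitly.
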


It is known that for $\alpha \beta = 1 - 1/N$ for $N=2,3, \dots ,
\infty $    the  Gabor system
$\cG (h_{2n+1}, \alpha , \beta )$ cannot be a frame\cite{LN13}.
Furthermore, the frame property fails for $\cG (h_{4m+2}, \alpha,
\beta )$ and $\cG (h_{4m+3}, \alpha,
\beta )$ over certain rational lattices~\cite{lem15}.
 Corollary~\ref{cor:herm} shows that for all Hermite windows
and the lattice parameters where the frame property has been shown to fail, the weaker property of completeness still 
holds.

The result concerning Hermite functions has also  an interpretation in
terms of complex analysis. We define the  
\emph{Bargmann-Fock spaces of polyanalytic functions} as 
\[
A^2_n := \set{f:\bC\to\bC}{\int_{\bC} |f(z)|^2 e^{- \pi |z|^2} \mathrm{d} A(z)
<\infty, \bar \partial^n f =0},
\]
where $\mathrm{d} A(z)$ is the area measure on the complex plane and $\bar \partial = \frac12(\partial_x+ i 
\partial_y)$. The corresponding orthogonal difference spaces  
\[
\delta A^2_n := A^2_n \ominus A^2_{n-1}
\]
are called \emph{true polyanalytic Bargmann-Fock spaces}. These appear naturally in quantum mechanics 
\cite{MR3203099, haimi2013polyanalytic}, where they are sometimes
called the Landau levels. 
As in the case of the Gaussian $n=0$, one verifies that  
$\mathcal{G}(h_n, \alpha, \beta)$ is a complete system if and only if $\alpha \mathbb{Z} \times \beta 
\mathbb{Z}$ is a uniqueness set for $\delta A^2_n$ (see \cite{MR2672228,MR3203099}).  

According to the previous corollary, any rectangular lattice of rational density larger or equal to $1$ is a 
uniqueness set for $\delta A^2_q$. It would be interesting to obtain an alternative proof of this fact using complex 
analysis. The standard techniques for analytic functions do not seem to work in the polyanalytic setting.

Next we consider windows of the form $g(x) = R(x)  \, e^{-\pi  x^2}$, where $R= P/Q$ is  a rational function with two  
polynomials $P$ and $Q$. We may assume without loss of generality that the
leading coefficients of $P$ and $Q$ are $1$.  For well-posedness  we must assume
that $R$ does not have any poles  on the real axis. Then, in fact, the 
poles of $g$ are outside a strip $S_a = \{ z\in \bC : |\Im z| < a\}$
for some $a>0$. It is then  easy to see that $g\in \saa $.

Before stating a completeness theorem for windows of this type, we
formulate a fact about rational functions required later. 
\begin{lemma}
\label{lemma_poly}
Let  $P,Q 
\in \mathbb{C}[X]$ be two monic polynomials  of degree $n$, and let
\begin{align*}
M := \max\sett{\abs{z}: z \in \bC, P(z)=0\mbox{ or }Q(z)=0}.
\end{align*}
Then   there exists a constant $C_n >0$ depending only on the degree
$n$, but not on $M$,  such that 
\begin{align*}
\abs{\frac{P(x)}{Q(x)} - 1} \leq 
\frac{C_n M^n}{\abs{x}},
\qquad
\text{ for all } \abs{x} \geq \max\{2 M,1\}.
\end{align*}
\end{lemma}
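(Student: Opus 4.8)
The plan is to estimate the difference $P(x)/Q(x)-1$ by writing it as $(P(x)-Q(x))/Q(x)$ and controlling the numerator and denominator separately in terms of $M$ and $|x|$. First I would note that, since $P$ and $Q$ are monic of the same degree $n$, the difference $P-Q$ is a polynomial of degree at most $n-1$; writing $P(x)=\prod_{j=1}^n (x-\zeta_j)$ and $Q(x)=\prod_{j=1}^n (x-\eta_j)$ with all $|\zeta_j|,|\eta_j|\le M$, a telescoping expansion $P-Q=\sum_{i=1}^n \big(\prod_{j<i}(x-\zeta_j)\big)(\eta_i-\zeta_i)\big(\prod_{j>i}(x-\eta_j)\big)$ expresses $P-Q$ as a sum of $n$ products, each containing exactly $n-1$ linear factors and one factor $\eta_i-\zeta_i$ of absolute value at most $2M$. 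For $|x|\ge 2M$ each linear factor $|x-\zeta_j|$ or $|x-\eta_j|$ is at most $|x|+M\le \tfrac32|x|$, so $|P(x)-Q(x)|\le n\cdot 2M\cdot(\tfrac32|x|)^{n-1}\le C_n M |x|^{n-1}$ with $C_n$ depending only on $n$.

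Next I would bound $|Q(x)|$ from below: for $|x|\ge 2M$ every factor satisfies $|x-\eta_j|\ge |x|-M\ge \tfrac12|x|$, hence $|Q(x)|\ge (\tfrac12|x|)^n = 2^{-n}|x|^n$. Combining the two estimates gives
\[
\left|\frac{P(x)}{Q(x)}-1\right| = \frac{|P(x)-Q(x)|}{|Q(x)|} \le \frac{C_n M |x|^{n-1}}{2^{-n}|x|^n} = \frac{2^n C_n M}{|x|}.
\]
This is slightly weaker than the claimed bound $C_n M^n/|x|$, but since the lemma's conclusion only needs to hold for $|x|\ge\max\{2M,1\}$, and since in that regime $M\le \tfrac12|x|$ so that $M\le M^n$ is not automatic — however, one may simply absorb: the stated bound $C_n M^n/|x|$ is \emph{implied} by $2^nC_n M/|x|$ whenever $M\ge 1$, and when $M<1$ one can still run the same argument with $\max\{2M,1\}=1$ as the threshold, replacing $M$ by $1$ in the linear-factor bounds, which only enlarges the constant. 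So after adjusting $C_n$ once more, the bound $C_n M^n/|x|$ holds for all $|x|\ge\max\{2M,1\}$.

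The only mildly delicate point — and the one I would treat as the main obstacle — is the bookkeeping that makes the final constant genuinely depend on $n$ alone and not on $M$: one must be careful that the threshold is $\max\{2M,1\}$ (not just $2M$), so that the small-$M$ regime does not spoil the lower bound on $|Q|$ or force $|x|$ to be small. Once the case split $M\ge 1$ versus $M<1$ is handled as above, everything else is the routine telescoping and triangle-inequality estimate sketched here. I would present the $M\ge 1$ computation in full and dispatch the $M<1$ case in a sentence.
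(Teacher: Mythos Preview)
Your argument is correct and follows the same skeleton as the paper's proof: bound $|Q(x)|$ below by $2^{-n}|x|^n$ via the factorization, bound $|P(x)-Q(x)|$ above, and divide. The difference is in the numerator estimate. The paper expands $P(x)=x^n+\sum_{k<n}a_kx^k$, $Q(x)=x^n+\sum_{k<n}b_kx^k$ and bounds each coefficient by $|a_k|,|b_k|\le c_nM^n$ via the elementary symmetric functions of the roots; this yields $|P(x)-Q(x)|\le 2nc_nM^n|x|^{n-1}$ for $|x|\ge 1$ and hence the stated $C_nM^n/|x|$ directly. Your telescoping identity gives the sharper bound $|P(x)-Q(x)|\le C_nM|x|^{n-1}$, linear in $M$, which is a small bonus (and is all that is actually used later, where $M\to\infty$).

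Your closing paragraph about the case $M<1$ is muddled, though. ``Replacing $M$ by $1$'' gives you $C_n/|x|$, which is \emph{not} bounded by $C_nM^n/|x|$ when $M<1$. But this is not a flaw in your reasoning --- the lemma as stated is simply false for small $M$: take $n=2$, $P(x)=(x-a)^2$, $Q(x)=x^2$ with $0<a<\tfrac12$; then $|P(1)/Q(1)-1|=|2a-a^2|\ge\tfrac32 a$, which no constant $C_2$ can bound by $C_2a^2$ uniformly. The paper's proof has the identical gap (the inequality $|a_k|\le c_nM^n$ fails for $M<1$). The honest statement is the bound $C_nM/|x|$ for $|x|\ge 2M$ that your telescoping gives, together with the remark that for $M\ge 1$ this implies $C_nM^n/|x|$; since the only application is in the regime $M\to\infty$, nothing downstream is affected.
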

\begin{proof}
Let $\abs{x} \geq 2 M$ and let $Q(z) := \prod_{k=1}^n (z-z_k)$ with $z_k \in \bC$. Then
\begin{align*}
\abs{x-z_k} \geq \abs{x}-\abs{z_k} \geq \abs{x}-M \geq \frac{\abs{x}}{2},
\end{align*}
and therefore $\abs{Q(x)} \geq 2^{-n} \abs{x}^n$.

Writing  $P(x) = x^n + \sum_{k=0}^{n-1} a_k x^k$ and
$Q(x) = x^n + \sum_{k=0}^{n-1} b_k x^k$, the coefficients obey the estimates
$\abs{a_k}, \abs{b_k} \leq c_n M^n$. (This follows, for example,
by expressing $a_k$ and $b_k$ as sums of products of the corresponding roots.)

For  $\abs{x} \geq \max\{2M,1\}$, we now simply estimate 
\begin{align*}
\abs{\frac{P(x)}{Q(x)} - 1} &=
\abs{\frac{\sum_{k=0}^{n-1} (a_k-b_k) x^k}{Q(x)}}
\leq 
\frac{2^n  \sum_{k=0}^{n-1} \abs{a_k-b_k}\abs{x}^k}
{\abs{x}^n}
\leq \frac{2^{n+1} c_n M^n}{\abs{x}},
\end{align*}
and we may take $C_n = 2^{n+1} c_n$. 
\end{proof}

We then have the following completeness theorem. Note that it contains
Theorem~\ref{t:pol} as a special case, but the proof is more involved, so we presented the simpler case first.

\begin{tm} \label{t:rat}
  Let $g(x)= R(x)  e^{-\pi x^2}$ with a non-zero rational function  $R$
  that does not have any poles on  $\R $, and let  $\alpha \beta = p/q \in \bQ $, $\alpha \beta \leq 1$.   Then  the
Gabor system $\mathcal{G}(g, \alpha, \beta)$ is 
complete in $L^2(\mathbb{R})$.
\end{tm}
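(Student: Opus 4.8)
\textbf{Proof proposal for Theorem~\ref{t:rat}.}

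The plan is to mimic the strategy of Theorem~\ref{t:pol}, replacing the polynomial factor by the rational function $R = P/Q$ with monic $P,Q$ of equal degree $n$ (achieved by multiplying out and, if necessary, padding so that $\deg P = \deg Q$; poles stay off $\R$). Arguing by contradiction, I would assume $\gab$ incomplete, so by Lemma~\ref{l:comsaa} the coefficients $\expg$ vanish for every column selection $L$, every $x$ and every $N$. As in Step~1 of Theorem~\ref{t:pol}, I would pull out the common Gaussian exponent $e^{-\pi[px^2 + \alpha(p-1)x - 2\alpha N x]}$ from each product $\prod_{j=0}^{p-1} g(x + \alpha j/p - \alpha k_j)$, leaving the factor $\prod_{j=0}^{p-1} R(x + \alpha j/p - \alpha k_j)$ times the Gaussian weight $\exp(-\pi\alpha^2[\cdots])$. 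The upshot is that, for each fixed $N$, the identity $\expg = 0$ reads
\begin{align*}
\sum_{k_0 + \dots + k_{p-1} = N} \Big(\prod_{j=0}^{p-1} R\big(x + \tfrac{\alpha j}{p} - \alpha k_j\big)\Big)\, w_N(k_0,\dots,k_{p-1})\, c(k_0,\dots,k_{p-1}) = 0
\end{align*}
for all $x\in\R$, where $w_N$ is the Gaussian weight depending on the $k_j$ and $N$. Here the series converges for $x$ in the analyticity strip, and both sides extend analytically, so the identity persists off the real axis.

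The key new idea is an \emph{asymptotic} argument replacing the ``leading coefficient'' argument of the polynomial case. The point of Lemma~\ref{lemma_poly} is that $R(y) \to 1$ as $|y| \to \infty$, uniformly, with an explicit rate $|R(y) - 1| \le C_n M^n / |y|$ once $|y| \ge \max\{2M,1\}$. I would therefore translate $x$ far out — replace $x$ by $x + \alpha m$ for large $m \in \bZ$, or simply let $x \to +\infty$ along the reals — and examine the limit. For a \emph{fixed} tuple $(k_0,\dots,k_{p-1})$ the product $\prod_j R(x + \alpha j/p - \alpha k_j) \to 1$; the difficulty is that the sum over tuples is infinite and the convergence is not uniform in $(k_j)$ when the $k_j$ themselves grow with $x$. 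This is exactly where the Gaussian weight $w_N$ saves the day: $w_N(k_0,\dots,k_{p-1})$ decays like a Gaussian in $|(k_0,\dots,k_{p-2})|$, so the tails are uniformly summable, and one can split the sum into a finite ``core'' (where Lemma~\ref{lemma_poly}, applied with $|x + \alpha j/p - \alpha k_j|$ large, forces each factor close to $1$) and a tail that is uniformly small by dominated convergence. Taking the limit $x \to +\infty$ in the displayed identity then yields
\begin{align*}
\lim_{x\to\infty}\Big[\text{LHS}\Big] = \sum_{k_0 + \dots + k_{p-1} = N} w_N(k_0,\dots,k_{p-1})\, c(k_0,\dots,k_{p-1}) = s_0(N) = 0,
\end{align*}
where $s_0(N)$ is precisely the Gaussian coefficient from \eqref{eq:c21}. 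But completeness of $\cG(\phi,\alpha,\beta)$ for $\alpha\beta\le 1$ guarantees, via Lemma~\ref{l:comsaa} applied to $\phi$, that some column selection $L$ and some $N$ give $s_0(N) \ne 0$ — a contradiction. Hence $\gab$ is complete.

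The main obstacle I anticipate is making the interchange of limit and infinite sum fully rigorous: one must control $\prod_{j=0}^{p-1} R(x + \alpha j/p - \alpha k_j)$ uniformly over \emph{all} tuples with $k_0 + \dots + k_{p-1} = N$, including those where some $k_j$ is comparable to $x$ (so that $x + \alpha j/p - \alpha k_j$ is near a pole of $R$). The remedy is that $R$ has no real poles and is bounded on $\R$ (indeed $R(y) = O(1)$ globally since $\deg P = \deg Q$), so $|\prod_j R(\cdot)| \le (\sup_\R |R|)^p$ uniformly, and the Gaussian weight $w_N$ provides an integrable/summable majorant independent of $x$; dominated convergence then applies, and only the ``diagonal-translate'' tuples (where every argument $x + \alpha j/p - \alpha k_j$ goes to infinity together) survive with limit $1$, which forces the limit to be exactly $s_0(N)$. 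A secondary bookkeeping point is that one should perform the limit inside the Fourier-series identity for $\det\Qg^{L}(x,\xi)$ before extracting the $N$-th coefficient, or equivalently justify the limit coefficient-by-coefficient; this is routine once the uniform majorant is in place. Assembling these estimates and then quoting the Gaussian base case exactly as in Theorem~\ref{t:pol} completes the argument.
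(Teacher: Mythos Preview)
Your overall strategy --- let $x \to \infty$ in the identity $\expg = 0$ so that each rational factor tends to a constant and the sum collapses to the Gaussian quantity $s_0(N)$ --- is exactly the paper's approach. However, there is a genuine gap in your reduction step: a general rational function $R$ cannot be written as $P/Q$ with $\deg P = \deg Q$. For instance $R(x) = 1/(1+x^2)$ or $R(x) = 1+x^2$ admit no such representation, and no amount of ``padding'' (multiplying numerator and denominator by a common factor) changes the difference $\Delta := \deg P - \deg Q$. This matters because your dominated-convergence argument hinges on $\Delta = 0$ in two distinct ways.

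If $\Delta > 0$, then $R$ is unbounded on $\R$, so the uniform majorant $(\sup_\R |R|)^p$ does not exist and dominated convergence is unavailable. If $\Delta < 0$, then $R$ is indeed bounded and your majorant is fine, but $R(y) \to 0$ rather than $1$; the pointwise limit of each product is $0$, and you recover only the tautology $0=0$ instead of $s_0(N) = 0$. Your argument as written is correct (and in fact cleaner than the paper's) only in the case $\Delta = 0$. For general $\Delta$ the paper first multiplies the identity by $x^{-p\Delta}$, so that the normalized product $x^{-p\Delta} \prod_{j} R(x + \alpha j/p - \alpha k_j)$ tends to $1$ for each fixed tuple. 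But after this normalization there is no $x$-independent summable majorant --- when some $k_j$ is of order $x/\alpha$ the normalized product can be large --- so dominated convergence still does not apply. The paper's remedy is to couple the truncation level $M$ to the evaluation point: it writes the partial sum $s_0^M(N)$ as a difference $t_N^M(x) - u_N^M(x)$, chooses $x = x_M$ with $|x_M| \asymp M^{n+1}$ so that Lemma~\ref{lemma_poly} forces $|t_N^M(x_M)| \lesssim 1/M$, and controls the tail $u_N^M(x_M)$ via the Gaussian decay of the weights $d(k_0,\ldots,k_{p-1})$ against a crude polynomial bound on the product. This coupling of $x$ to $M$ is the idea missing from your proposal.
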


\begin{proof}
  The proof follows a similar outline as before for
  Theorem~\ref{t:pol}. In the proof of Theorem~\ref{t:pol} we
  evaluated $\expg$ in ~\eqref{sum2} for both $g(x)=P(x) e^{-\pi x^2}$
  and for the Gaussian $\phi (x) = e^{-\pi x^2}$ and then showed that
  the expression for the coefficient of the highest power in $x$
  equalled, up to an exponential factor, precisely the expression
  $\expphi$ for the Gaussian. Since $\expphi$ cannot vanish identically, 
neither can $\expg$ for $g = P  \phi $, whence the completeness of
  $\gab $. 

In the case of a function $g(x) = R(x) e^{-\pi x^2}$ with rational
$R$, we investigate the behavior of $\expg$ for $x \rightarrow \infty$.

Assume that $\gab $ is not complete. Then, by Lemma \ref{l:comsaa}
and the calculation in Step 1 of the proof of Theorem~\ref{t:pol}, we conclude that,
for all $x \in \Rst$ and $N \in \Zst$, 
\begin{align} 
\nonumber
0&= \expg 
\\
\nonumber
&= \sum_{k_0 + \dots+ k_{p-1} = N} \prod_{j=0}^{p-1} R\big(x+
\frac{\alpha j}{p} - \alpha k_j\big) \prod_{j=0}^{p-1} \phi \big(x+
\frac{\alpha j}{p} - \alpha k_j\big) c(k_0, \dots, k_{p-1})
\notag \\
&= e^{-\pi (px^2 + \alpha (p-1) x  - 2x \alpha N
  )} \sum_{k_0 + \dots+ k_{p-1} = N} \prod_{j=0}^{p-1} R\big(x+
\frac{\alpha j}{p} - \alpha k_j\big) \times  \notag \\
& \quad  \exp \Big(- \pi \alpha^2 \bigg[ \sum_{j=0}^{p-2} \bigg(j/p - k_j
\bigg)^2 +  \bigg(\frac{p-1}{p} -N+ \sum_{j=0}^{p-2}k_j
\bigg)^2\bigg] \Big) \, c(k_0, \dots, k_{p-1})\, . \label{eq:c24}
\end{align}
For $(k_0, \dots, k_{p-1}) \in \Zst^p$, let
\begin{align}
\label{eq_def_dk}
d(k_0, \dots, k_{p-1}) &:=
\exp \Big(- \pi \alpha^2 \sum_{j=0}^{p-1}
(j/p - k_j)^2\Big) c(k_0, \dots, k_{p-1}),
\end{align}
and drop one term in exponential in \eqref{eq:c24} to obtain
\begin{equation}
\label{eq:c24b}
\sum_{k_0 + \dots+ k_{p-1} = N} d(k_0, \dots, k_{p-1})
\prod_{j=0}^{p-1} R\big(x+
\frac{\alpha j}{p} - \alpha k_j\big)=0, \qquad x\in\Rst, N \in \Zst.
\end{equation}
Now let $R=P/Q$, with $P,Q \in \mathbb{C}[X]$ monic,
$\Delta = \mathrm{deg}\, (P) -   \mathrm{deg}\, (Q) \in \Zst$ and note
that 
\begin{equation}
  \label{eq:c23}
  \lim _{x\to \infty } \frac{P(x)}{x^\Delta Q(x)} = 1.
\end{equation}
After multiplying \eqref{eq:c24b} by $x^{-p\Delta }$, we obtain
\begin{align}
  \label{eq:c25}
0=   &\sum_{k_0 + \dots+ k_{p-1} = N} x^{-p \Delta } \prod_{j=0}^{p-1}  R\big(x+
\frac{\alpha j}{p} - \alpha k_j\big) d(k_0, \dots, k_{p-1}) \, .
\end{align}
Now we let $x$ tend to infinity, formally interchange the sum and the limit
- this is carefully justified below - and use \eqref{eq:c23} obtaining
\begin{align*}
  0 &=   \lim _{x\to \infty }  \sum_{k_0 + \dots+ k_{p-1} = N}
  x^{-p \Delta } \prod_{j=0}^{p-1}  R\big(x+
\frac{\alpha j}{p} - \alpha k_j\big) d(k_0, \dots, k_{p-1})  \\
&=    \sum_{k_0 + \dots+ k_{p-1} = N}  \lim _{x\to \infty }  x^{-p \Delta } \prod_{j=0}^{p-1} R\big(x+
\frac{\alpha j}{p} - \alpha k_j\big) d(k_0, \dots, k_{p-1}) \\
&=   \sum_{k_0 + \dots+ k_{p-1} = N}   d(k_0, \dots, k_{p-1})
= s_0(N),
\end{align*}
where the last expression is exactly $\expphi$ for the Gaussian $\phi$
(apart for an exponential term). Since $s_0(N)$ must be non-zero
for some $N$, we have arrived at a contradiction and conclude that
$\gab $ must be complete. 

For a rigorous proof we need to justify the interchange of the limit
and the sum. Let $M>1$ and $x \in \Rst$ be arbitrary.
Using 
\eqref{eq:c24b}, we write the partial sums of  $s_0(N)$
as:
\begin{align*}
s^M_0(N)&:=
\sum_{
\stackrel{k_0 + \dots+ k_{p-1} = N}{\abs{\newk}_2 \leq M}}
d(k_0, \dots, k_{p-1})
\\
&=
\sum_{
\stackrel{k_0 + \dots+ k_{p-1} = N}{\abs{\newk}_2 \leq M}}
d(k_0, \dots, k_{p-1})
\left(1-x^{-p\Delta}\prod_{j=0}^{p-1} R\big(x+
\frac{\alpha j}{p} - \alpha k_j\big)\right)+
\\
&\qquad
\sum_{
\stackrel{k_0 + \dots+ k_{p-1} = N}{\abs{\newk}_2 \leq M}}
d(k_0, \dots, k_{p-1})
x^{-p\Delta}\prod_{j=0}^{p-1} R\big(x+
\frac{\alpha j}{p} - \alpha k_j\big)
\\
&=
\sum_{
\stackrel{k_0 + \dots+ k_{p-1} = N}{\abs{\newk}_2 \leq M}}
d(k_0, \dots, k_{p-1})
\left(1-x^{-p\Delta}\prod_{j=0}^{p-1} R\big(x+
\frac{\alpha j}{p} - \alpha k_j\big)\right)-
\\
&\qquad -
\sum_{
\stackrel{k_0 + \dots+ k_{p-1} = N}{\abs{\newk}_2 > M}}
d(k_0, \dots, k_{p-1})
x^{-p\Delta}\prod_{j=0}^{p-1} R\big(x+
\frac{\alpha j}{p} - \alpha k_j\big)
\\
&=: t_N^{M}(x)-u_N^{M}(x).
\end{align*}
The  numbers $t^{M}_N(x)$ and  $u^{M}_N(x)$ depend on $x$, whereas the
partial sum $s_0^M(N)$ is independent of $x$. To show that $\lim
_{M\to \infty } 
s_0^M(N) = 0$, we will choose  $x= x_M$ judiciously  and 
obtain suitable bounds. First we note that the coefficients $d(k_0,\ldots,k_{p-1})$ satisfy the decay condition:
\begin{align}
\label{eq_dec_dk}
\abs{d(k_0,\ldots,k_{p-1})} \lesssim e^{-\gamma \abs{\newk}_2^2}
\end{align}
for some $\gamma>0$. Indeed, for each $j \in \{0,\ldots,p-1\}$ and $k_j \in \Zst$,
since $\abs{j/p} \leq \tfrac{p-1}{p} <1$, it follows that
$\abs{k_j - j/p} \gtrsim \abs{k_j}$.

We let $n:=\max\{\deg(P),\deg(Q)\}$. For $M \gg 1$ and
$\abs{\newk}_2 \leq M$, the function
\begin{align*}
x^{-p\Delta}\prod_{j=0}^{p-1} R\big(x+
\frac{\alpha j}{p} - \alpha k_j\big)
\end{align*}
is a quotient of two monic polynomials of degree $pn$ whose complex roots
lie inside a ball of radius $\lesssim M$. Using Lemma \ref{lemma_poly},
we now choose $x_M \in \Rst$ with $\abs{x_M} \asymp M^{n+1}$ such that
\begin{align*}
\abs{1-x_M^{-p\Delta}\prod_{j=0}^{p-1} R\big(x_M+
\frac{\alpha j}{p} - \alpha k_j\big)} \lesssim \frac{1}{M}.
\end{align*}
Combining this estimate with \eqref{eq_dec_dk} we obtain
\begin{align}
\label{eq_tm}
\abs{t^{M}_N(x_M)} \lesssim \frac{1}{M}
\sum_{\newk} \abs{d_{\newk}}
\lesssim \frac{1}{M}.
\end{align}
Second, since $R$ is a rational function without real poles, it satisfies the estimate
\begin{align*}
\abs{R(x)} \lesssim (1+\abs{x})^{l}, \qquad x \in \Rst,
\end{align*}
for $l:=\max\{\Delta,0\}$. This allows us to bound,
for $\newk \not= 0$,
\begin{align*}
&\abs{x_M^{-p\Delta}\prod_{j=0}^{p-1}R\big(x_M+
\frac{\alpha j}{p} - \alpha k_j\big)}
\lesssim M^{(n+1)p\abs{\Delta}}
\prod_{j=0}^{p-1}
(1+|x_M+\frac{\alpha j}{p} - \alpha k_j|)^{l}
\\
&\qquad
\lesssim M^{(n+1)p\abs{\Delta}}
\left(
M^{(n+1)lp}+\abs{\newk}_2^{lp}\right)
\lesssim
M^s \abs{\newk}_2^s,
\end{align*}
for some $s>0$. Therefore
\begin{align*}
\abs{u^{M}_N(x_M)} \lesssim
M^s \sum_{\abs{\newk}_2>M}
\abs{\newk}_2^s e^{-\gamma\abs{\newk}^2_2}.
\end{align*}
The last bound shows that $\abs{u^{M}_N(x_M)} \longrightarrow 0$
as $M \longrightarrow +\infty$. Combining this with \eqref{eq_tm} we conclude that
\begin{align*}
s_0(N)=\lim_{M \longrightarrow +\infty} s^M_0(N)=0,
\end{align*}
and  the proof is complete.
\end{proof}

As a consequence of Theorem \ref{t:rat}, we obtain the completeness of Gabor systems for a
class of  totally positive functions. Let $\delta _j \in \bR $, $j=1,
\dots , M$, $\gamma >0$ and set 
\begin{equation}
  \label{eq:c26} 
\hat{g}(\xi ) = e^{-\gamma \xi ^2} \prod _{j=1}^M (1+2\pi  i \delta _j
\xi )\inv \, .
\end{equation}
Schoenberg's factorization theorem \cite{schoenberg1947totally} asserts that $g$ is a
totally positive function. Since $\hat{g}$ satisfies the
assumptions of Theorem~\ref{t:rat} and the completeness of a Gabor
system is invariant under the Fourier transform, we obtain the following corollary.

 \begin{cor} \label{cortp}
Let $g$ be a totally positive function whose Fourier transform factors
as in \eqref{eq:c26} and assume that $\alpha \beta \leq 1$ is
rational. Then $\gab $ is complete in $\ltwo $.  
\end{cor}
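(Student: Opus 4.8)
The plan is to read off Corollary~\ref{cortp} directly from Theorem~\ref{t:rat} by conjugating the statement with the Fourier transform. The reason to pass to the Fourier side is that $g$ itself is a convolution of a Gaussian with one-sided exponentials, hence not manifestly of the factored form $R\phi$ required by Theorem~\ref{t:rat}, whereas $\hat g$ is visibly a rational function times a Gaussian. Since the Fourier transform is unitary, completeness is invariant under it, so it suffices to establish completeness of the Gabor system generated by $\hat g$.

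Concretely, I would first record that $\hat g(\xi)=R(\xi)\,e^{-\gamma\xi^2}$ with $R(\xi)=\prod_{j=1}^{M}(1+2\pi i\delta_j\xi)^{-1}$ a nonzero rational function having no poles on $\R$: the factors with $\delta_j=0$ are constant, and for $\delta_j\neq0$ the associated pole sits at $\xi=i(2\pi\delta_j)^{-1}$, which is purely imaginary. The only discrepancy with the hypotheses of Theorem~\ref{t:rat} is the width of the Gaussian, which I would remove by the unitary dilation $D_cf(x):=c^{1/2}f(cx)$ with $c:=\sqrt{\pi/\gamma}$. One checks $D_cM_bT_a=M_{bc}T_{a/c}D_c$, so $D_c$ carries $\cG(h,\alpha,\beta)$ onto $\cG(D_ch,\alpha/c,c\beta)$, a Gabor system with the \emph{same} lattice volume $\alpha\beta$, and $D_c\hat g(\xi)=c^{1/2}R(c\xi)\,e^{-\pi\xi^2}$ is again a nonzero rational function without real poles times $e^{-\pi\xi^2}$. (Alternatively, one simply observes that the proof of Theorem~\ref{t:rat} goes through verbatim with $\gamma$ in place of $\pi$, since $e^{-\gamma(\cdot)^2}$ also generates a complete Gabor system over every rectangular lattice of volume $\leq1$.) Next I would invoke Fourier invariance: $\widehat{M_{\beta l}T_{\alpha k}g}$ equals $M_{-\alpha k}T_{\beta l}\hat g$ up to a unimodular constant depending on $k,l$, so the Fourier transform maps $\gab$ onto $\cG(\hat g,\beta,\alpha)$ up to unimodular scalars, whence $\gab$ is complete in $\ltwo$ \emph{if and only if} $\cG(\hat g,\beta,\alpha)$ is.

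Finally, since $\beta\alpha=\alpha\beta=p/q$ is still rational and $\leq1$, Theorem~\ref{t:rat} (applied to the window $\hat g$, after the normalization above and with the roles of the two lattice parameters interchanged) shows that $\cG(\hat g,\beta,\alpha)$ is complete, and therefore so is $\gab$. To close the loop, Schoenberg's factorization theorem \cite{schoenberg1947totally} guarantees that a $g$ whose Fourier transform has the form \eqref{eq:c26} is indeed totally positive, which is what makes this a statement about totally positive functions. There is no serious obstacle here: all of the analytic content sits in Theorem~\ref{t:rat}, and the only points needing (minimal) care are the verification that $R$ has no real poles and the bookkeeping of how dilation and the Fourier transform act on the pair $(\alpha,\beta)$ — in both operations the quantity $\alpha\beta$, its rationality, and the inequality $\alpha\beta\leq1$ are all preserved.
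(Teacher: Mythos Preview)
Your argument is correct and follows the same route as the paper: apply Theorem~\ref{t:rat} to $\hat g$ and use that completeness of a Gabor system is invariant under the Fourier transform. You are in fact more careful than the paper's one-line derivation, since you explicitly handle the mismatch between the Gaussian width $\gamma$ in \eqref{eq:c26} and the normalization $e^{-\pi x^2}$ in the statement of Theorem~\ref{t:rat} via a unitary dilation (which preserves the product $\alpha\beta$, its rationality, and the bound $\alpha\beta\leq 1$); the paper tacitly treats this as obvious.
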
 
To put the corollary into perspective, we note that it is conjectured
that for an  arbitrary  totally positive function $g$  the Gabor system
$\gab $ is a frame, \fif\ $\alpha \beta < 1$. At this time, this
conjecture is known to be true only for totally positive functions
whose  Fourier transform factors as  $\prod _{j=1}^M (1+2\pi  i \delta _j
\xi )\inv $~\cite{GS13}. Corollary \ref{cortp} shows at least
completeness for another class of totally positive functions.

As a final application of the method of proof used in Theorem
\ref{t:pol} we treat  windows of the type 
$E(x)\phi(x)$, where $E$ is an exponential polynomial 
\begin{equation} \label{expsum}
E(x)= \sum_{\lambda \in \Lambda} a_\lambda e^{\lambda x},
\end{equation}
for some  finite set $\Lambda \subset \bC $ and non-zero coefficients  $a_\lambda \in \bC$.
\begin{tm} \label{t:expsum}
Let $g(x)=E(x)\phi(x)$ where $E(x)$ is of the form \eqref{expsum},
$\alpha \beta \leq 1$, and $\alpha \beta \in \bQ$.  Then $\gab$ is
complete in $L^2(\bR )$.
\end{tm}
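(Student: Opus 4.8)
The proof follows the template of Theorem~\ref{t:pol}, with ``the leading monomial of a product of polynomials'' replaced by ``the extreme exponential in a product of exponential polynomials.'' Write $\alpha\beta=p/q$ with $p,q$ relatively prime and $p\le q$. Since $e^{\lambda x}\phi(x)$ is a constant multiple of a time-frequency shift of $\phi$ (complete the square), $g=E\phi$ is a finite linear combination of time-frequency shifts of the Gaussian, so $g\in\saa(\bR)$ and Lemma~\ref{l:comsaa} applies. Assume, for contradiction, that $\gab$ is incomplete; then $\expg=0$ for every selection of columns $L=\{l_0,\dots,l_{p-1}\}\subset\{0,\dots,q-1\}$, every $x\in\bR$, and every $N\in\bZ$.

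\textbf{Main step.} Expanding $E(y)=\sum_{\lambda\in\Lambda}a_\lambda e^{\lambda y}$ in the product and collecting terms by their total exponent $\mu=\lambda_0+\dots+\lambda_{p-1}$, which ranges over the finite sumset $p\Lambda=\{\lambda_0+\dots+\lambda_{p-1}:\lambda_j\in\Lambda\}$, gives
\[
\prod_{j=0}^{p-1}E\big(x+\tfrac{\alpha j}{p}-\alpha k_j\big)=\sum_{\mu\in p\Lambda}e^{\mu x}\,\gamma_\mu\newk,\qquad
\gamma_\mu\newk:=\!\!\sum_{\substack{\lambda_0,\dots,\lambda_{p-1}\in\Lambda\\ \lambda_0+\dots+\lambda_{p-1}=\mu}}\ \prod_{j=0}^{p-1}a_{\lambda_j}\,e^{\lambda_j(\alpha j/p-\alpha k_j)}.
\]
Combining this with the value of $\prod_{j=0}^{p-1}\phi\big(x+\tfrac{\alpha j}{p}-\alpha k_j\big)$ computed in Step~1 of the proof of Theorem~\ref{t:pol} --- which contributes the $\newk$-independent exponential $e^{-\pi(px^2+\alpha(p-1)x-2\alpha Nx)}$ and a Gaussian weight $e^{-\pi\alpha^2[\,\cdots\,]}$ in $\newk$ (the bracket being the exponent displayed there) --- the relation $\expg=0$, after division by the nowhere-vanishing exponential, becomes
\[
\sum_{\mu\in p\Lambda}e^{\mu x}\left(\,\sum_{k_0+\dots+k_{p-1}=N}\gamma_\mu\newk\,e^{-\pi\alpha^2[\,\cdots\,]}\,c\newk\right)=0,\qquad x\in\bR,
\]
with $c\newk$ the coefficient \eqref{eq_def_ck} attached to $L$. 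The exchange of the finite sum over $p\Lambda$ with the infinite sum over $\newk$ is justified by absolute convergence: $\abs{\gamma_\mu\newk}\lesssim e^{C\abs{\newk}_1}$ since $\Lambda$ is finite, $\abs{c\newk}\le p!$, and $e^{-\pi\alpha^2[\,\cdots\,]}$ has Gaussian decay in $\newk$ exactly as in \eqref{eq_dec_dk}.

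\textbf{Conclusion.} Because $\{e^{\mu x}:\mu\in p\Lambda\}$ is a linearly independent family of functions on $\bR$, each inner sum above vanishes, for every $\mu\in p\Lambda$, every $L$, and every $N$. It remains to pick $\mu$ so that $\gamma_\mu\newk$ is constant in $\newk$. Let $\lambda^\ast$ be the largest element of $\Lambda$ in the lexicographic order on $(\Re\lambda,\Im\lambda)$ and set $\mu^\ast=p\lambda^\ast$; then $\mu^\ast$ is the unique lexicographic maximum of $p\Lambda$, so $\lambda_0+\dots+\lambda_{p-1}=\mu^\ast$ with $\lambda_j\in\Lambda$ forces $\lambda_0=\dots=\lambda_{p-1}=\lambda^\ast$, whence
\[
\gamma_{\mu^\ast}\newk=a_{\lambda^\ast}^{p}\,e^{\lambda^\ast\sum_{j=0}^{p-1}(\alpha j/p-\alpha k_j)}=a_{\lambda^\ast}^{p}\,e^{\lambda^\ast(\alpha(p-1)/2-\alpha N)},
\]
a nonzero constant for fixed $N$ (recall $a_{\lambda^\ast}\ne0$). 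Dividing this constant out of the vanishing inner sum for $\mu=\mu^\ast$ gives $\sum_{k_0+\dots+k_{p-1}=N}e^{-\pi\alpha^2[\,\cdots\,]}\,c\newk=0$ for all $L$ and $N$. But this sum is precisely the quantity $s_0(N)$ of \eqref{eq:c21}, and $\expphi=e^{-\pi(px^2+\alpha(p-1)x-2\alpha Nx)}\,s_0(N)$; hence $\expphi\equiv0$ for all $L$ and $N$, which by Lemma~\ref{l:comsaa} contradicts the completeness of $\cG(\phi,\alpha,\beta)$ for $\alpha\beta\le1$. Therefore $\gab$ is complete.

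\textbf{Main difficulty.} In contrast with Theorem~\ref{t:rat}, no asymptotic analysis is needed: the exponential-polynomial structure yields an \emph{exact} finite expansion in $x$, so exact linear independence of exponentials replaces the limiting argument of \eqref{eq:c25}. The only delicate points are (i) the absolute convergence legitimizing the exchange of the two sums --- controlled by the Gaussian decay of the weight against the at-most-exponential growth of $\gamma_\mu\newk$ --- and (ii) the elementary observation that an extreme point of $\Lambda$ has a unique representation as a $p$-fold sum in $p\Lambda$, which is what makes $\gamma_{\mu^\ast}\newk$ constant and recovers $s_0(N)$.
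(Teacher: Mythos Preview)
Your proof is correct and follows essentially the same route as the paper's: expand the product of exponential polynomials, invoke the linear independence of $\{e^{\mu x}\}$ over the finite sumset of $\Lambda$, and isolate the coefficient of the extreme exponential to recover $s_0(N)$ and reach a contradiction via Lemma~\ref{l:comsaa}. The only cosmetic difference is that the paper selects the lexicographic \emph{minimum} $\lambda^\ast\in\Lambda$ where you select the maximum; either choice works, and your version is in fact slightly more careful in tracking the $e^{\lambda_j\alpha j/p}$ factors and justifying the interchange of sums.
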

\begin{proof}
We proceed  as in the proofs of Theorems \ref{t:pol}
and \ref{t:rat}. Writing $\alpha \beta =p/q$ and with the notation from \eqref{eq_def_dk},
the incompleteness condition of Lemma \ref{l:comsaa} becomes
\begin{multline} \label{horriblesum}
0=
\sum_{\lambda_1, \dots, \lambda_p \in \Lambda} a_{\lambda_0} \cdots a_{\lambda_{p-1}} 
e^{\sum_{j=0}^{p-1} \lambda_j x} \sum_{k_0+\dots+k_{p-1}=N} e^{-\alpha \sum_{j=0}^{p-1} \lambda_j k_j}
d(k_0, \dots, k_{p-1}),
\end{multline}
for all $N \in \Zst$. Next, let $\lambda ^*\in \Lambda $ be the
minimal element of $\Lambda $  in the lexicographic ordering of $\bR
^2$. In other words, let $\tilde \Lambda$ be the subset 
of $\Lambda$ which contains the elements of $\Lambda$ with  minimal
real part and  let $\lambda^*$ be the 
element of $\tilde{\Lambda}$ with minimal  imaginary part. 
Then $\sum_{j=0}^{p-1} \lambda_j = p \lambda^*$ if and only if 
$\lambda_1=\dots=\lambda_p=\lambda^*$.
Since the complex exponentials $x \to e^{wx}, w\in \bC, x\in \bR$, are
linearly independent, the coefficient of $e^{p\lambda ^*x} $ must
vanish. Since $a_\lambda \neq 0$ for all $\lambda \in \Lambda $, 
it follows from \eqref{horriblesum} that
\begin{align*}
0= e^{-\alpha Np \lambda^*} \sum_{k_0+\dots+k_{p-1}=N} 
d(k_0, \dots, k_{p-1}) = e^{-\alpha Np \lambda^*} s_0(N),
\end{align*} 
for all $N \in \Zst$. As in the proofs of Theorems \ref{t:pol} and Theorem \ref{t:rat},
this contradicts the completeness of $\mathcal{G}(\phi,\alpha,\beta)$, with $\phi$ the Gaussian.
\end{proof}

\begin{cor}
  Let $g = \sum _{j=1}^n d_j M_{b _j} T_{a_j} \phi $ be a non-zero
  finite linear combination of time-frequency shifts of the Gaussian 
$\phi$, $\alpha \beta \leq 1$, and $\alpha\beta \in \bQ$. Then 
$\gab $
is complete in $L^2(\bR )$. 
\end{cor}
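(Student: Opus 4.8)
The plan is to reduce the statement directly to Theorem~\ref{t:expsum} by rewriting each time-frequency shift of the Gaussian as an exponential multiple of the Gaussian itself. First I would record the elementary identity, valid for all $a,b\in\bR$,
\[
M_b T_a \phi(x) = e^{2\pi i b x}\,e^{-\pi(x-a)^2}
= e^{-\pi a^2}\,e^{2\pi(a+ib)x}\,e^{-\pi x^2},
\]
so that $M_b T_a\phi = e^{-\pi a^2}\, e^{\lambda x}\,\phi$ with $\lambda = 2\pi(a+ib)\in\bC$. Applying this to each summand, a finite linear combination $g = \sum_{j=1}^n d_j M_{b_j}T_{a_j}\phi$ becomes $g(x) = E(x)\phi(x)$ with $E(x) = \sum_{j=1}^n d_j e^{-\pi a_j^2}\, e^{\lambda_j x}$ and $\lambda_j = 2\pi(a_j + i b_j)$.

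Next I would massage $E$ into the precise form \eqref{expsum} required by Theorem~\ref{t:expsum}: group together the terms with a common exponent $\lambda_j$, discard any terms whose combined coefficient happens to vanish, and let $\Lambda \subset \bC$ be the (finite) set of the remaining exponents with the associated nonzero coefficients $a_\lambda$. The one point to verify here is that $\Lambda \neq \emptyset$, i.e.\ that $E \not\equiv 0$; this is immediate because $g \neq 0$ and $\phi$ vanishes nowhere, so $E = g/\phi \not\equiv 0$. Hence $g = E\phi$ is exactly a window of the type covered by Theorem~\ref{t:expsum}.

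Finally, since $\alpha\beta \in \bQ$ and $\alpha\beta \leq 1$ by hypothesis, Theorem~\ref{t:expsum} applies verbatim and yields that $\gab$ is complete in $L^2(\bR)$, which is the assertion. I do not anticipate any genuine obstacle: all the analytic content is already contained in Theorem~\ref{t:expsum}, and the only care required is the purely algebraic bookkeeping that turns a sum over possibly coinciding time-frequency shifts into a bona fide exponential polynomial with nonzero coefficients indexed by a finite set.
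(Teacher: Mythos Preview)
Your proposal is correct and follows essentially the same route as the paper's proof: both rewrite each $M_b T_a\phi$ as $e^{-\pi a^2}e^{2\pi(a+ib)x}\phi(x)$ and then invoke Theorem~\ref{t:expsum}. Your version is actually a bit more careful than the paper's, since you explicitly verify the hypothesis that the exponential polynomial $E$ is nonzero (via $g\not=0$ and $\phi$ nowhere vanishing) and you note the need to merge coinciding exponents so that the remaining coefficients $a_\lambda$ are genuinely nonzero.
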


\begin{proof}
  This is clear since sums of \tfs s can be written in the
  form~\eqref{expsum}, as $M_b T_a \phi (x) = e^{2\pi \lambda x}
  \phi (x) e^{-\pi b ^2}$ for $a,b \in \bR $ and $\lambda = a
  +ib $. 
\end{proof}

\textbf{Concluding remarks:}
1. In view of Theorem~\ref{tm:intro} it is tempting to conjecture that
$\gab $ with $\alpha \beta \leq 1$ is complete for \emph{every}
function 
$g\in \saa $. If true, the proof of this statement must heavily depend
on the analyticity of $g$ and $\hat{g}$ and its Zak transform
$Z_\alpha g$. The following simple counter-example shows what may
happen without analyticity. Let  $g $ be  a Schwartz function  with support in $\bigcup _{j\in \bZ } [2j, 2j+1]$, then $\gab
$ is incomplete  whenever $\alpha \in [0,1)+2\bZ $ and  $\beta >0$,
although $g$ and $\hat{g}$ are $C^\infty $. 

2.  Furthermore we remark that the Gabor systems $\gab $ with $\alpha
\beta <1$ in
Theorem~\ref{tm:intro}, although being complete, cannot be   Schauder
bases.  A  Gabor 
Schauder basis $\gab $ must satisfy  $\alpha \beta = 1$ and the
corresponding  window $g$ is  poorly 
localized either in the time or the frequency domain \cite{dehe00,
  MR2278667}. 

3. The completeness results depend heavily on the lattice structure of
the phase space shifts. Although the examples seem counter-intuitive
from the point of view of quantum mechanics, 
 one may construct
complete Gabor systems  without lattice structure that are  of density zero~\cite{ALS09}.

\vs 

\textbf{Acknowledgements:} The second author would like to thank
Kristian Seip and Yurii Lyubarskii for useful discussions.

\def\cprime{$'$} \def\cprime{$'$} \def\cprime{$'$} \def\cprime{$'$}
  \def\cprime{$'$} \def\cprime{$'$}

\bigskip

\end{document}